
\documentclass{easychair}
\usepackage{hyperref}
\usepackage{doc}

\usepackage{todonotes}
\usepackage{amsmath,amsthm,amssymb,mathrsfs,thmtools,thm-restate}

\newtheorem{theorem}{Theorem}[section]
\newtheorem{lemma}[theorem]{Lemma}
\newtheorem{proposition}[theorem]{Proposition}
\newtheorem{corollary}[theorem]{Corollary}

\newtheorem{example}[theorem]{Example}

\theoremstyle{definition}
\newtheorem{definition}{Definition}[section]

\theoremstyle{remark}
\newtheorem*{remark}{Remark}


%


%
\title{Induction Models on $\mathbb{N}$\thanks{The author list has been sorted alphabetically by last name;
		this should not be used to determine the extent of authors’ contributions.}}

%
\author{
A. Dileep\inst{1}
\and
Kuldeep S. Meel \inst{2}
\and
Ammar F. Sabili \inst{2}
}

\institute{
  Indian Institute of Technology Delhi,
  India\\
\and
   National University of Singapore,
   Singapore \\
 }


\authorrunning{A.Dileep, K.S.Meel and A.F.Sabili}

\titlerunning{Induction Models on $\mathbb{N}$}

\begin{document}

\maketitle



%
%

\begin{abstract}
	Mathematical induction is a fundamental tool in computer science and mathematics. Henkin~\cite{HL1960} initiated the study of formalization of mathematical induction restricted to the setting when the base case $B$ is set to singleton set containing $0$ and a unary generating function $S$. The usage of mathematical induction often involves wider set of base cases and $k-$ary generating functions with different structural restrictions.  
	 While subsequent studies have shown several Induction Models to be equivalent, there does not exist precise logical characterization of reduction and equivalence among different Induction Models. In this paper, we generalize the definition of Induction Model and demonstrate existence and construction of $S$ for given $B$ and vice versa. We then provide a formal characterization of the reduction among different Induction Models that can allow proofs in one Induction Models to be expressed as proofs in another Induction Models. The notion of reduction allows us to capture equivalence among Induction Models. 
	
%
\end{abstract}
\section{Introduction}
\label{sect:introduction}
Mathematical induction is a fundamental tool in automated reasoning, and more broadly in computer science and mathematics~\cite{bundy1999automation,B17,G14,van2008handbook}. To prove that a mathematical object $\mathcal{A}$ satisfies the property $P$ by mathematical induction, one proceeds by a careful, and often {\em creative} design of induction hypothesis and associated base case $B$~\cite{HL1960}. The property is first shown to hold over the base case and then shown to hold under induction hypothesis~\cite{G14}. While  mathematical induction is often taught to involve creativity in the design of inductive hypothesis~\cite{D89,H02}, modern automated theorem proves employ mathematical induction as a core technique. 

The widespread usage of mathematical induction has led to plethora of Induction Models defined as tuples of base case and the associated generating functions~\cite{B95,S12}. The existence of plethora of Induction Models begs for a formal analysis of Induction Models. The seminal work of Henkin~\cite{HL1960} provided the earliest definition of Induction Model on $\mathbb{N}$ where the base case is restricted $0$ and the associated generating function $S$ is unary. Subsequent work of Doornbos, Backhouse, and Woude~\cite{DBV97} presented several different formulations of mathematical inductions and demonstrated their equivalence.  

Motivated by the usage of several different Induction Models and their equivalence, we carry forth Henkin's work by providing a logical foundation of reduction and equivalence among different Induction Models. To this end, we generalize Henkin's definition of an Induction Model. While designing an appropriate the induction hypothesis may seem matter of human creativity, we discuss the properties of the base case $B$ and generating set $S$ for the tuple $\langle B, S \rangle$ to be a Induction Model. We then discuss reduction and equivalence among different Induction Models. While the focus of this paper is to lay a formal foundation of Induction Models, we briefly discuss motivations and potential applications of the primary contributions of this paper: \autoref{main_theorem: summary_finding_B_and_S} and \autoref{theorem: equivalent_condition_reduction}.

\subsection{What makes $\langle B, S \rangle$ an $\mathbb{N}$-Induction Model?}
The first principle of induction can be written as $\langle \{1\},S: x \rightarrow x+1 \rangle$. Other examples of models of induction are $\langle \{1,2,\ldots,m\},S: x \rightarrow x+m \rangle$ and $\langle A,S: x \rightarrow x-1 \rangle$, where $A$ is a infinite subset of $\mathbb{N}$. What subsets $B \subset \mathbb{N}$ and $S: \mathbb{N}^k \rightarrow \mathbb{Z}$ can give us an Induction Model? Henkin's formulation\cite{HL1960} defines an Induction Model for the case  where the base set contains just the element 0 and the generating function $S$ is unary. We make this definition more general by allowing $B$ to be any subset of $\mathbb{N}$ and $S$ to be a $k$-ary function, and in particular, formalize the notion of $\mathbb{N}$-induction model. 

An important contribution of this paper is study of existence and construction of $B$ for a given $S$ and vice versa under different restrictions on the structure of $S$. (See Definitions~\ref{def: self_loop_function},~\ref{def: additive_S}, and~\ref{def: multiplicative_form} for the formal definitions of {\em self-loop function}, {\em additive structure}, and {\em multiplicative structure}). 
\begin{restatable}{theorem}{characterise} 
	\label{main_theorem: summary_finding_B_and_S}
	\begin{enumerate}
		\item For every non-self loop function $S: \mathbb{N}^k \rightarrow \mathbb{Z}$, there exists a $B \subset \mathbb{N}$ such that $\langle B,S \rangle$ is an $\mathbb{N}$-I.M. So this is true for $S$ with additive and multiplicative structures as well.
		\item For any non-empty $B \subset \mathbb{N}$, there exists a function $S:\mathbb{N}^k \rightarrow \mathbb{Z}$ (for some $k$) such that $\langle B,S \rangle$ is an $\mathbb{N}$-I.M. We can find such an $S$ with additive structure as well. If $\vert B \vert \geq 2$, this is also true for $S$ with multiplicative structure.
	\end{enumerate}
\end{restatable}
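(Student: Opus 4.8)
The plan rests on reading $\langle B,S\rangle$ as an $\mathbb{N}$-I.M. precisely when $B$ is a proper non-empty subset of $\mathbb{N}$ whose $S$-closure — the least subset of $\mathbb{N}$ that contains $B$ and is stable under every application of $S$ that lands in $\mathbb{N}$ — is all of $\mathbb{N}$; the two items are then settled by opposite moves (choosing $B$ from $S$ versus choosing $S$ from $B$). For item~1 I would use the non-self-loop hypothesis in its barest form: it supplies at least one \emph{genuine} application, a tuple $(a_1,\dots,a_k)$ with $t := S(a_1,\dots,a_k)\notin\{a_1,\dots,a_k\}$. Setting $B=\mathbb{N}\setminus\{t\}$ then does everything at once: each $a_i$ lies in $B$, so a single application of $S$ drops $t$ into the closure, forcing the closure to be all of $\mathbb{N}$, while $B$ is proper by construction. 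This choice is uniform in $S$, so it immediately covers the additive and multiplicative special cases named in item~1; the matching negative fact — a self-loop $S$ never enlarges any set, so its only closed superset reaching $\mathbb{N}$ is $\mathbb{N}$ itself — is what makes the hypothesis sharp.

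For item~2 a single formula settles both the unrestricted existence claim and the additive one together, so I would exhibit it outright: the binary map $S(x,y)=2x-y+1$. Its diagonal values $S(x,x)=x+1$ let the closure climb upward, so from any one seed $b\in B$ we already capture every natural number $\ge b$; and the values $S(x,x+2)=x-1$ let it climb back down, so the pairs $(x,x+2)$ carry the closure to the bottom of $\mathbb{N}$. Hence even a singleton base closes up to all of $\mathbb{N}$, which is exactly why the additive case needs no lower bound on $\lvert B\rvert$. The only bookkeeping is to confirm that $S$ meets Definition~\ref{def: additive_S} and that the occasional negative outputs, being outside $\mathbb{N}$, are simply never adjoined to the closure.

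The multiplicative half is the crux, and it explains the hypothesis $\lvert B\rvert\ge 2$. A single seed is provably hopeless: under any genuinely multiplicative $S$ its closure stays inside the multiplicative structure spanned by that seed together with the constants of $S$, so it misses, for instance, every prime not already present and can never equal $\mathbb{N}$; two seeds are therefore the minimum that can work. With $b_1,b_2\in B$ in hand, my plan is to pass to ``exponent coordinates'' and try to realise, through a multiplicative $S$ in the sense of Definition~\ref{def: multiplicative_form}, the same up-and-down mechanism that makes $2x-y+1$ succeed, using the two seeds to supply the independent coordinates that a single seed lacks. I expect the genuine difficulty to sit exactly here: a naive logarithmic transport of the additive construction only reaches a geometric progression rather than all of $\mathbb{N}$, so the real work is to design the multiplicative $S$ and the two-seed bootstrap so that every natural number is actually produced while keeping all retained values integral. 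Pinning this down, and isolating precisely where $\lvert B\rvert\ge 2$ is indispensable, is the step I anticipate will carry the technical weight of the proof.
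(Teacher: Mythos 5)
Your handling of item~1 and of the additive half of item~2 is correct and is essentially the paper's own argument. For item~1 the paper likewise picks a witness tuple with $S(x_1,\dots,x_k)=a\notin\{x_1,\dots,x_k\}$ and sets $B=\mathbb{N}\setminus\{a\}$ (Lemma~\ref{lemma: every_s_there_is_b}); for the additive case it uses $S:(x,y)\rightarrow x-y+(q+1)$ with $q\in B$ (Lemma~\ref{lemma: additive_S_every_B_there_is_S}), which collapses to $x+1$ at $y=q$ and to $x-1$ at $y=q+2$ --- the same up/down mechanism as your $2x-y+1$.

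The multiplicative case is where your proposal stops being a proof: you describe a plan (``exponent coordinates'', a logarithmic transport of the additive trick) and explicitly defer the construction, and that deferred step is exactly the content of the claim. The paper resolves it with a direct bilinear choice (Lemmas~\ref{multiplicative_every_B_there_is_S_consecutive} and~\ref{multiplicative_every_B_there_is_S_any_two}): for $p<q$ both in $B$, take $S:(x,y)\rightarrow xy+y-qy+1$, whose coefficient on $xy$ is $1$; then $S(q,y)=y+1$ generates every integer above $p$, in particular $q-1$, and $S(q-1,q-1)=(q-1)\bigl((q-1)+1-q\bigr)+1=1$, after which $S(q,y)=y+1$ sweeps out all of $\mathbb{N}$. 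No passage to exponents is needed --- the lower-order terms allowed by Definition~\ref{def: multiplicative_form} are what let a ``multiplicative'' $S$ degenerate to the successor on a slice. Relatedly, your assertion that a single seed is ``provably hopeless'' for multiplicative $S$ is not justified: because of those lower-order and constant terms the closure of a singleton is not confined to any multiplicative span, and the paper in fact leaves the $|B|=1$ multiplicative case as an open question rather than a negative result. You should either prove that claim or drop it; as written it asserts more than the theorem does.
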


{\bfseries Open Question:} Does there always exist $S$ with multiplicative structure for $B=1$.

\paragraph*{Potential Applications} The proof of \autoref{main_theorem: summary_finding_B_and_S} is constructive and provides general recipe for finding $S$ with appropriate structure for a given $B$ and vice versa. We expect such a recipe to lead to algorithmic results in the context of automated mathematical induction~\cite{B95} providing where one knows that a given property $P$ holds for some generating function $S$ and now needs to find the corresponding $B$ such that once $P$ is shown to hold over $B$, we can conclude that $P$ holds for all $n \in \mathbb{N}$.

\subsection{A Classification Among Induction Models}
The following are some well known Induction Models (except maybe Definition \ref{def: prime_induction}).

\begin{definition}[First principle of induction] 
	\label{def: first_principle}
	Let $P(n)$ be a statement. If
	\begin{enumerate}
		\item[(i)] $P(1)$ is true
		\item[(ii)] $P(k)$ is true $\implies P(k+1)$ is true
	\end{enumerate}
	then $P(n)$ is true $\forall$ $n \in \mathbb{N}$.
\end{definition}

\begin{definition}[Strong form of induction] 
	\label{def: strong_form}
	Let $P(n)$ be a statement. If
	\begin{enumerate}
		\item[(a)] $P(1)$ is true
		\item[(b)] $P(1),P(2), \ldots, P(k)$ is true $\implies P(k+1)$ is true
	\end{enumerate}
	then $P(n)$ is true $\forall$ $n \in \mathbb{N}$.
\end{definition}

\begin{definition}[Backward induction]
	\label{def: backward_induction}
	Let $A \subseteq \mathbb{N}$ be an infinite subset. If
	\begin{enumerate}
		\item[(c)] $P(a)$ is true $\forall$ $a \in A$
		\item[(d)] $P(k)$ is true $\implies P(k-1)$ is true
	\end{enumerate}
	then $P(n)$ is true $\forall$ $n \in \mathbb{N}$.
\end{definition}

\begin{definition}[Prime Induction]
	\label{def: prime_induction}
	Let $\mathbb{P}$ be be the set of all primes. If
	\begin{enumerate}
		\item[(e)] $P(a)$ is true $\forall$ $a \in \mathbb{P} \cup \{1\}$
		\item[(f)] $P(i),P(j)$ is true $\implies P(ij)$ is true
	\end{enumerate}
	then $P(n)$ is true $\forall$ $n \in \mathbb{N}$.
\end{definition}

It is easy to show that the first principle and strong form of induction are equivalent. If we assume that Definition \ref{def: first_principle} holds, then we could construct a new statement $Q(k)=P(1) \wedge P(2) \wedge \ldots P(k)$. We can apply the first principle on $Q(n)$ to show that $Q(n)$ is true for all $n \in \mathbb{N}$. So, $P(n)$ is true for all $n \in \mathbb{N}$. For the other way, if we know that (i) and (ii) hold, then (a) and (b) also hold. So, Definitions \ref{def: first_principle} and \ref{def: strong_form} are equivalent. Now given any Induction Model, is it equivalent to the first principle of induction? The key to the proof above was coming up with the new statement $Q$. But it might not be easy to construct one for any general Induction Model. For example, can a similar proof be given for the backward Induction Model and the first principle of induction (if they are equivalent)?
To this end, we formalize the concept of reduction and equivalence among different Induction Models. In formally, let $\langle B_1,S_1 \rangle$ and $\langle B_2,S_2 \rangle$ be two Induction Models, then if $\langle B_1,S_1 \rangle$ can be reduced to $\langle B_2,S_2 \rangle$ (according to our definition), we show that any proof for a statement $P(n)$ which uses $\langle B_1,S_1 \rangle$ can be converted into a proof that uses $\langle B_2,S_2 \rangle$. For example, by demonstrating equivalence among the Backward Induction and Prime Induction Models, our method can be used to convert a proof that uses one model into a proof that uses the other one.

If $\langle B,S \rangle$ is an Induction Model, we show in Section \ref{sect:induction_model_definition} that we have $\bigcup\limits_{i=0}^{\infty} S^i(B)=\mathbb{N}$. We can associate a number, $n(\langle B,S \rangle)$, with each Induction Model based on how many times $S$ needs to be applied on B to reach $\mathbb{N}$. For example, for the first principle of induction, we need to apply $S$ $\aleph_0$ many times.
\begin{restatable}{theorem}{criteria}  
	\label{theorem: equivalent_condition_reduction}
	Let $\langle B_1,S_1 \rangle$ and $\langle B_2,S_2 \rangle$ be Induction Models. Then, $\langle B_1,S_1 \rangle$ can be reduced to $\langle B_2,S_2 \rangle$ iff $n(\langle B_1, S_1 \rangle) \leq n(\langle B_2, S_2 \rangle)$. Moreover, $\langle B_1,S_1 \rangle$ is equivalent to $\langle B_2,S_2 \rangle$ iff $n(\langle B_1, S_1 \rangle)=n(\langle B_2, S_2 \rangle)$.
\end{restatable}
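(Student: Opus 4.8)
The plan is to reduce the whole statement to a single invariant of an induction model, namely the closure sequence behind the identity $\bigcup_{i=0}^{\infty} S^i(B)=\mathbb{N}$. For $\langle B,S\rangle$ with $S\colon\mathbb{N}^k\to\mathbb{Z}$ I set $C_0=B$ and $C_{i+1}=C_i\cup\{\,S(x_1,\dots,x_k): x_1,\dots,x_k\in C_i\,\}$, giving a monotone chain with $\bigcup_i C_i=\mathbb{N}$. I would first record the bookkeeping lemma that $n(\langle B,S\rangle)$ is the least $i$ with $C_i=\mathbb{N}$ when one exists and $\aleph_0$ otherwise, and that, writing $\ell(m)=\min\{i:m\in C_i\}$ for the \emph{level} of $m$, we have $n(\langle B,S\rangle)=\sup_{m\in\mathbb{N}}\ell(m)$ inside $\mathbb{N}\cup\{\aleph_0\}$. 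This turns the ordering $\le$ in the theorem into a statement about how deep the two generation processes must go, and it is what makes $n$ a meaningful handle on \emph{reductions}.

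Next I would unfold the definition of reduction from Section~\ref{sect:induction_model_definition} in these terms. Converting a proof from $\langle B_1,S_1\rangle$ to $\langle B_2,S_2\rangle$ means: for an arbitrary property $P$ whose source obligations hold (i.e.\ $P$ holds on $B_1$ and is preserved by $S_1$), we must exhibit a derived property $Q$ whose \emph{target} obligations hold (i.e.\ $Q$ holds on $B_2$ and is preserved by $S_2$) with $(\forall n\,Q(n))\Rightarrow(\forall n\,P(n))$, where the target obligations for $Q$ are discharged using only the source obligations for $P$ and finitely many rewriting steps --- crucially, not the source conclusion $\forall n\,P(n)$, since that would trivialize everything. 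The prototype is $Q(m)=P(1)\wedge\cdots\wedge P(m)$ for strong versus first-principle induction; the content of the theorem is that exactly this kind of transformation exists precisely when $n_1\le n_2$.

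For the ($\Leftarrow$) direction I assume $n_1\le n_2$ and build the conversion explicitly, splitting into the cases $n_1,n_2$ both finite, $n_1$ finite with $n_2=\aleph_0$, and $n_1=n_2=\aleph_0$. In each case I define $Q(m)$ as a level-indexed accumulation of $P$ over the elements that the source closure attaches to $m$, and then verify the $B_2$-base and $S_2$-step for $Q$ by unfolding a single round of the source closure inside each target step; the inequality $n_1\le n_2$ is exactly the guarantee that the target has at least as many rounds (or, for $\aleph_0$, unboundedly many) to absorb the full source generation, so that $\forall n\,Q(n)$ yields $\forall n\,P(n)$. The last case recovers, uniformly, the classical equivalences among the first-principle, strong, backward (with sparse $A$), and prime models, all of which have $n=\aleph_0$.

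The ($\Rightarrow$) direction is where I expect the real difficulty, and I would argue it by contraposition. Assuming $n_2<n_1$, the target base $B_2$ cannot be covered from the source within bounded reach: I would exhibit a witness property $P$ --- essentially a predicate recording, for each $m$, the exact source round at which $m$ first appears --- engineered so that $P$ satisfies the source obligations yet certifying any admissible $Q$ on all of $B_2$ would force every natural number to appear by target round $n_2$, contradicting the existence of an element of source level exceeding $n_2$. The crux, and the step I expect to fight with, is showing that \emph{any} legitimate conversion is level-non-increasing from source to target, i.e.\ that the restriction to finitely many rewriting steps prevents the conversion from secretly re-running an unbounded induction to meet the target base obligation; once this monotonicity of conversions in the level parameter is in place, $n_2<n_1$ rules out a reduction. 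Finally, the equivalence clause is immediate: mutual reducibility gives $n_1\le n_2$ and $n_2\le n_1$, and antisymmetry of $\le$ on $\mathbb{N}\cup\{\aleph_0\}$ forces $n_1=n_2$, and conversely.
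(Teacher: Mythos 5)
There is a genuine gap, and it stems from working with the wrong definition of reduction. The theorem is a statement about Definition~\ref{def: reduction}, which is purely set-theoretic: a reduction is a relation $R\colon Cl(\langle B_2,S_2\rangle)\to 2^{Cl(\langle B_1,S_1\rangle)}$ whose union is $Cl(\langle B_1,S_1\rangle)$, which sends $B_2$ onto $B_1$, and whose value on every non-base $x=S_{2_{inj}}(n_1,\dots,n_{k_2})$ is \emph{forced} by the recursive clause $R(x)=S_1(\bigcup_i R(n_i))\cup\bigl[\bigcup_i R(n_i)\bigr]$. You instead reason about an informal notion of ``proof conversion'' (properties $P$, $Q$, obligations, rewriting steps), which is the paper's \emph{motivation} (its Appendix on Definition~\ref{def: reduction}) but not the object the theorem quantifies over. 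This misidentification is exactly why the step you flag as the crux of the ($\Rightarrow$) direction --- that any conversion is ``level-non-increasing'' --- looks hard to you: in your setting it would require formalizing what a conversion may do. Against the actual definition it is a one-line induction: clause (3) immediately gives $R(D_k(\langle B_2,S_{2_{inj}}\rangle))\subseteq S_1(Cl_{k-1}(\langle B_1,S_1\rangle))\cup Cl_{k-1}(\langle B_1,S_1\rangle)=Cl_k(\langle B_1,S_1\rangle)$, whence $R(Cl(\langle B_2,S_2\rangle))\subseteq Cl_{n_2-1}(\langle B_1,S_1\rangle)$, which contradicts condition (1) when $n_2<n_1$. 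No engineered witness property is needed, and your proposed contrapositive via a predicate ``recording the source round'' cannot be completed without first pinning down a formal conversion calculus that the paper never introduces.

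Two further problems. First, your ($\Leftarrow$) direction is morally the paper's construction (it maps $B_2\to B_1$ and $D_i(\langle B_2,S_{2_{inj}}\rangle)\to Cl_i(\langle B_1,S_1\rangle)$), but a complete proof must verify the three conditions of Definition~\ref{def: reduction} for that relation --- in particular that clause (3) reproduces $Cl_m(\langle B_1,S_1\rangle)$ on each difference set, which uses Lemma~\ref{lemma: smallest_power_of_x_gives_S_inj_power} and the fact that some preimage coordinate lies in $D_{m-1}$; your sketch never touches the injective version $S_{2_{inj}}$ at all, even though the definition is phrased in terms of it. Second, the theorem is stated for arbitrary Induction Models, not $\mathbb{N}$-I.M.s, so your starting identity $\bigcup_i C_i=\mathbb{N}$ is false in general (see Example~\ref{ex: reduction_does_not_imply_induction_form}, where $Cl(\langle B_1,S_1\rangle)$ is the even numbers); everything must be phrased relative to $Cl(\langle B,S\rangle)$, and your level function $\ell$ and the identification of $n(\langle B,S\rangle)$ with $\sup_m\ell(m)$ need to be restated accordingly (the paper's $n$ is the least $i$ with $D_i=\emptyset$, which is off by one from the least $i$ with $C_i$ stabilized). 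The final paragraph on equivalence is fine once the iff for reduction is in place.
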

\paragraph*{Potential Applications} 
To the best of our knowledge, Theorem~\ref{theorem: equivalent_condition_reduction} provides the first characterization for reduction and equivalence of different Induction Models. The proof of \autoref{theorem: equivalent_condition_reduction} is constructive and provides a recipe to convert a proof in one inductive model to a proof in another inductive model. We perceive such a recipe may be used to compose proofs of different lemmas since being able to reuse parts of the proofs is a major challenge~\cite{B95}. 


The rest of the paper is organized as follows:
We discuss the notations used in the paper in \autoref{sect:preliminaries} and we formally define Induction Models in \autoref{sect:induction_model_definition}. We discuss characterisation of Induction Models in \autoref{sect: characterisation},  reduction and equivalence in \autoref{sect:reduction_equivalence}. We finally  conclude in \autoref{sect:future-work}.
%
\section{Preliminaries}
\label{sect:preliminaries}

We first lists the symbols and notations used in this paper on the following table.

\begin{table}[htb]
\begin{center}
\begin{tabular}{| c | l |}
	\hline			
	\textbf{Notation} & \textbf{Description} \\
	\hline
	$\emptyset$		    & Empty set, $\emptyset = \{\}$ \\
	$B$ 			    & Base case \\
	I.M. 			    & (abbr. of) Induction Model \\
	$\mathbb{N}$	    & A set of natural numbers, $\mathbb{N} = \{1, 2, 3, \dots\}$ \\
	$\mathbb{N}$-I.M. 	& (abbr. of) $\mathbb{N}$-Induction Model \\
	$\aleph_0$ 		    & The cardinality of $\mathbb{N}$ \\
	$\mathbb{P}$	    & A set of prime numbers, $\mathbb{P} = \{2, 3, 5, 7, 11, \dots\}$ \\
	$P(i)$ 			    & Property of $i$ \\
	$S$ 			    & Generating function \\
	$\mathbb{Z}$	    & A set of integers, $\mathbb{Z} = \{\dots, -3, -2, -1, 0, 1, 2, 3, \dots\}$ \\
	\hline  
\end{tabular}
\end{center}
\end{table}
For any sets $A$ and $B$, $2^A$ denotes the power set of $A$ and $A \setminus B$ denotes set $A$ minus set $B$, i.e. $A \setminus B = \{x~|~x \in A \wedge x \not\in B\}$. We also give some definitions on specific functions called self-loop function, non-self-loop function, function with additive structure, and function with multiplicative structure. 

\begin{definition}[Self-loop and non-self-loop function]
	\label{def: self_loop_function}
	A function $F: \mathbb{N}^k \rightarrow \mathbb{Z}$ is said to be a self-loop function if for every $(x_1,x_2, \ldots x_k) \in \mathbb{N}^k$, $F(x_1,x_2, \ldots x_k) \in \{x_1,x_2, \ldots x_k\}$. A function which is not a self-loop function is said to be a non-self-loop function.
\end{definition}

\begin{remark}
	Identity function $F(x)=x$ is the only unary self-loop function. Another example of a self-loop function is $F(x_1,x_2,\ldots,x_k)=max(\{x_1,x_2, \ldots, x_k\})$.
\end{remark}

\begin{definition}[Additive Structure]
	\label{def: additive_S}
	A function $F: \mathbb{N}^k \rightarrow \mathbb{Z}$ is said to have an additive structure if it is of the form:
	\begin{equation*}
	F: (x_1,x_2, \ldots x_k) \rightarrow a_0+a_1x_1+a_2x_2+ \ldots a_kx_k
	\end{equation*}
	where $a_i \in \mathbb{Z}$ and $a_i \neq 0$ for $1 \leq i \leq k$.
\end{definition}

\begin{definition}[Multiplicative Structure]
	\label{def: multiplicative_form}
	$F: \mathbb{N}^k \rightarrow \mathbb{Z}$ is said to have a multiplicative structure if it is of the form:
	\begin{equation*}
	F(x_1,x_2, \ldots, x_k) = \sum\limits_{\mathbf{i} \in 2^{[1,k]}} a_\mathbf{i} \cdot 
	(\prod\limits_{j \in \mathbf{i}} x_j)
	\end{equation*}
	where $a_\mathbf{i} \in \mathbb{Z}$ and the leading coefficient of $F$ i.e. the coefficient of $x_1x_2 \cdots x_k$ is non-zero.
\end{definition}

\begin{example}
	$F: (x,y) \rightarrow xy-x-y+3$ has a multiplicative structure. But, $F: (x,y,z)=x^2yz-xy+z+2$ does not have multiplicative structure as its first term $x^2yz$ contains a higher power of $x$.
\end{example}

\begin{remark}
	Functions with additive or multiplicative structure cannot be self-loop functions. Proofs can be found in the Appendix (Lemma \ref{lemma: additive_S_cannot_be_self-loop} and Lemma \ref{lemma: multiplicative_S_cannot_be_self-loop}). 
\end{remark}

We now define an Induction Model. An Induction Model is identified by its base case and the associated generating function. Formally, 

\begin{definition}[Induction Model (I.M.)]
	A tuple $\langle B,S \rangle$ is said to be an Induction Model with base case $B$ and generating function $S$ if $B \subset \mathbb{N}$ and $S: \mathbb{N}^k \rightarrow \mathbb{Z}$
\end{definition}


\begin{remark}
In particular, $\langle B_0,S_0 \rangle$ denotes the first principle of induction (Definition \ref{def: first_principle}), where $B_0=\{1\}$ and $S_0: x \rightarrow x+1$. Also, we will call this model to be the `basic model of induction'.
\end{remark}

In the next definition, we define the powers of a generating function acting on a set.

\begin{definition}[Powers of $S$]
	\label{def: powers_of_S}
	Let $S: \mathbb{N}^k \rightarrow \mathbb{Z}$ and $A \subseteq \mathbb{N}$. Let $S^0(A)=A$. Then, powers of $S$ when applied on set $A$ is defined as
	\begin{equation*}
	S^i(A):=\left\{S(x_1,x_2, \ldots, x_k) : x_1,x_2, \ldots x_k \in \bigcup\limits_{j=0}^{i-1} S^j(A)\right\} \bigcap \mathbb{N}
	\end{equation*}
Note that the $x_i$s in the tuple $(x_1,x_2, \ldots, x_k)$ need not to be distinct. Also, notice that each power of $S$ is obtained after intersecting with $\mathbb{N}$. For example, for $S: x \rightarrow x-1$ and $A=\{1,2,3\}$, we get $S(A)=\{0,1,2\}$. But after intersecting this set with $\mathbb{N}$, we get $\{1,2\}$.
\end{definition}

We also define the closure of an I.M. and the difference sets of powers of $S$.

\begin{definition}[Closure of an I.M.]
\label{def: closure}
Let $\langle B,S \rangle$ be an I.M. then we define the following.
\begin{equation*}
    Cl_n(\langle B,S \rangle)=\bigcup\limits_{i=0}^{n} S^i(B)
\end{equation*}
In particular, we define $Cl(\langle B,S \rangle)=Cl_\infty(\langle B,S \rangle)$.
\end{definition}

\begin{definition}[Difference sets of powers of $S$]
	\label{def: diff_set}
	Let $\langle B,S \rangle$ be an I.M. then we define
	\begin{equation*}
	D_n(\langle B,S \rangle)=S^n(B) \setminus Cl_{n-1}(\langle B,S \rangle)
	\end{equation*}
\end{definition}
\section{$\mathbb{N}$-Induction Models}
\label{sect:induction_model_definition}

Henkin~\cite{HL1960} gave a definition for an I.M. which involved a base case containing an element $0$ and a unary function $S$. We generalise this in Definition \ref{def: N_induction_model}. It is not hard to see that this definition is equivalent to the condition $Cl(\langle B,S \rangle)=\mathbb{N}$. We prove this in Lemma \ref{lemma: mi_definitions_are_equivalent}.

\begin{definition}[$\mathbb{N}$-Induction Model ($\mathbb{N}$-I.M.)]
\label{def: N_induction_model}

Let $B$ be a non-empty subset of $\mathbb{N}$ and $S: \mathbb{N}^k \rightarrow \mathbb{Z}$. $\langle B, S \rangle$ is said to be an $\mathbb{N}$-Induction Model if the following holds: if $G \subseteq \mathbb{N}$ satisfies
\begin{enumerate}
    \item $B \subseteq G$, and
    \item if $x_1, x_2, \dots, x_k \in G$ and $S(x_1, x_2, \dots, x_k) \in \mathbb{N}$, then $S(x_1, x_2, \dots, x_k) \in G$,
\end{enumerate}
then $G = \mathbb{N}$.
\end{definition}

Let us see if the first principle of induction $\langle B_0,S_0 \rangle$ satisfies the above definition. Recall that $B_0=\{1\}$ and $S_0: x \rightarrow x+1$. Suppose there exists a $G \subseteq \mathbb{N}$ which satisfies conditions 1) and 2) in Definition \ref{def: N_induction_model}, but $G \neq \mathbb{N}$. Let $m \not\in G$. Apply $S_0$ on $1 \in B_0$, $(m-1)$ times, to obtain $m$. So, $m \in G$, which is a contradiction. So, no such $G$ exists.

\begin{example}
\label{example: not_N_induction_model}
Let us see an example of $\langle B,S \rangle$ which is not an $\mathbb{N}$-I.M. Consider $\langle \{2\}, S: x \rightarrow x + 1 \rangle$. This is not an $\mathbb{N}$-I.M. as $G=\mathbb{N}\setminus\{1\}$ satisfies both conditions, but $G \neq \mathbb{N}$.
\end{example}

For any $\mathbb{N}$-I.M., if $S$ is repeatedly applied on elements of $B$ and the new elements obtained in the previous steps, we should be able to obtain the entire set of natural numbers. We then would expect any $\langle B, S \rangle$ which satisfies Definition \ref{def: N_induction_model} to satisfy $Cl(\langle B,S \rangle)=\bigcup\limits_{i=0}^{\infty} S^i(B)=\mathbb{N}$. In the next theorem, we prove the equivalence of both these definitions. 
 
\begin{lemma}
\label{lemma: mi_definitions_are_equivalent}
 $\langle B,S \rangle$ satisfies Definition \ref{def: N_induction_model} $\iff Cl(\langle B,S \rangle)=\mathbb{N}$.
\end{lemma}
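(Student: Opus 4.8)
The plan is to prove both directions of the biconditional, and the natural strategy is to show that $Cl(\langle B,S\rangle)$ is precisely the smallest set $G$ satisfying conditions 1 and 2 of Definition~\ref{def: N_induction_model}. The key observation is that $Cl(\langle B,S\rangle) = \bigcup_{i=0}^{\infty} S^i(B)$ is itself closed under $S$ (restricted to $\mathbb{N}$) and contains $B = S^0(B)$, so it is a candidate for the set $G$ in the definition.

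For the forward direction ($\Rightarrow$), I would first verify that $G := Cl(\langle B,S\rangle)$ satisfies both hypotheses of Definition~\ref{def: N_induction_model}. Condition 1 holds since $B = S^0(B) \subseteq Cl(\langle B,S\rangle)$. For condition 2, suppose $x_1,\dots,x_k \in Cl(\langle B,S\rangle)$ with $S(x_1,\dots,x_k) \in \mathbb{N}$; since each $x_j$ lies in some $S^{i_j}(B)$, taking $i = 1 + \max_j i_j$ shows $S(x_1,\dots,x_k) \in S^i(B) \subseteq Cl(\langle B,S\rangle)$ by the definition of the powers of $S$. Since $\langle B,S\rangle$ satisfies Definition~\ref{def: N_induction_model}, applying the definition to this particular $G$ forces $Cl(\langle B,S\rangle) = \mathbb{N}$.

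For the converse ($\Leftarrow$), assume $Cl(\langle B,S\rangle) = \mathbb{N}$, and let $G \subseteq \mathbb{N}$ be any set satisfying conditions 1 and 2. The plan is to show by induction on $i$ that $S^i(B) \subseteq G$ for all $i \geq 0$. The base case $S^0(B) = B \subseteq G$ is exactly condition 1. For the inductive step, assuming $S^j(B) \subseteq G$ for all $j < i$, any element of $S^i(B)$ has the form $S(x_1,\dots,x_k)$ with each $x_\ell \in \bigcup_{j=0}^{i-1} S^j(B) \subseteq G$ and $S(x_1,\dots,x_k) \in \mathbb{N}$, so condition 2 places it in $G$. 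Taking the union over all $i$ gives $\mathbb{N} = Cl(\langle B,S\rangle) = \bigcup_{i=0}^{\infty} S^i(B) \subseteq G$, hence $G = \mathbb{N}$, which is exactly what Definition~\ref{def: N_induction_model} requires.

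I do not expect a serious obstacle here; the argument is essentially the standard fixpoint/least-closed-set characterization of inductively defined sets. The only point requiring a little care is the indexing in the definition of $S^i(A)$: the arguments $x_1,\dots,x_k$ are drawn from the \emph{cumulative} union $\bigcup_{j=0}^{i-1} S^j(A)$ rather than from $S^{i-1}(A)$ alone, and the intersection with $\mathbb{N}$ must be handled consistently so that the $\mathbb{N}$-membership side condition in Definition~\ref{def: N_induction_model} lines up exactly with the $\cap\,\mathbb{N}$ in Definition~\ref{def: powers_of_S}. Keeping these two ``land in $\mathbb{N}$'' conditions synchronized is the one place where the bookkeeping could go wrong, but it is routine once stated precisely.
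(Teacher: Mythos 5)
Your proposal is correct and follows essentially the same route as the paper: both directions take $G = Cl(\langle B,S\rangle)$ as the witness set for the forward implication and run an induction on the power index $i$ to show $S^i(B)\subseteq G$ for the converse. If anything, your handling of the indexing (taking $i = 1+\max_j i_j$ when the $x_j$ lie in different powers of $S$) is slightly more careful than the paper's own write-up, which informally places all the $x_j$ in a single $S^l(B)$.
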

 
\begin{proof}
($\Longrightarrow$) Suppose $\langle B,S \rangle$ satisfies Definition \ref{def: N_induction_model}. Let $G=\bigcup\limits_{i=0}^{\infty} S^i(B)$. We will show that $G$ satisfies the conditions 1) $\&$ 2) in Definition \ref{def: N_induction_model}, which will imply $G=\mathbb{N}$.
\begin{enumerate}
\item $B=S^0(B) \in G$.
\item Suppose $x_1, x_2, \ldots x_k \in G$ $\&$ $S(x_1,x_2, \ldots x_k) \in \mathbb{N}$. As $x_1, x_2, \ldots x_k \in G$, $x_1, x_2, \ldots x_k \in S^l(B)$ for some $l \geq 0$. As $S(x_1,x_2, \ldots x_k) \in \mathbb{N}$, $S(x_1,x_2, \ldots, x_k) \in S^{l+1}(B) \subseteq G$.
\end{enumerate}
So, $G=\bigcup\limits_{i=0}^{\infty} S^i(B) = \mathbb{N}$.

($\Longleftarrow$) We have $\bigcup\limits_{i=0}^{\infty} S^i(B)=\mathbb{N}$. Suppose $G \subseteq \mathbb{N}$ such that
\begin{enumerate}
\item $B \subseteq \mathbb{N}$,
\item if $x_1, x_2, \ldots x_k \in G$ $\&$ $S(x_1,x_2, \ldots x_k) \in \mathbb{N}$
\end{enumerate}
then $S(x_1,x_2, \ldots x_k) \in G$. It is enough to show that $\bigcup\limits_{i=0}^{\infty} S^i(B) \subseteq G$.

From 1), $B=S^0(B) \subseteq G$. Suppose for some $m \in \mathbb{N}$, $S^m(B) \not\subseteq G$. For every $k$-tuple $(x_1,x_2, \ldots x_k) \in S^i(B)$, $S(x_1,x_2, \ldots x_k) \in G$ if $S(x_1,x_2, \ldots x_k) \in \mathbb{N}$. So $S^{i+1}(B) \subseteq G$. By applying $S$ on $B$, $m$ times, we get $S^m(B) \subseteq G$, which is a contradiction. So, $S^m(B) \subseteq G$ $\forall m \in \mathbb{N}$. Hence, $\bigcup\limits_{i=0}^{\infty} S^i(B)=\mathbb{N} \subseteq G$. But, $G \subseteq \mathbb{N}$. So, $G=\mathbb{N}$.
\end{proof}

\section{Characterisation of $\mathbb{N}$-Induction Models}
\label{sect: characterisation}

In this section, we look at which $B \subset \mathbb{N}$ and $S: \mathbb{N}^k \rightarrow \mathbb{Z}$ combine to give an $\mathbb{N}$-I.M. $\langle B,S \rangle$. To start with, in subsection \ref{subsec: general_structure}, we consider any general $S$, with no restrictions on its structure. Then in subsections \ref{subsection: additive_structure} and \ref{subsection: multiplicative_structure}, we look at $S$ with `additive' and `multiplicative' structures respectively. We put these restrictions as the models which can be used practically tend to have generating functions with these type of structures.

We first describe a type of $S$ which can never give us an $\mathbb{N}$-I.M. That is, $\langle B,S \rangle$ is not an $\mathbb{N}$-I.M. for any $B \subset \mathbb{N}$.

\begin{lemma}
\label{lemma: induction_form_cannot_be_self_loop}
If $\langle B,S \rangle$ is an $\mathbb{N}$-I.M., $S$ cannot be a self-loop function.
\end{lemma}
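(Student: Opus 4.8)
The plan is to prove the contrapositive: I will show that if $S$ is a self-loop function, then $\langle B, S \rangle$ cannot be an $\mathbb{N}$-I.M. for any choice of base case $B \subset \mathbb{N}$. By Lemma~\ref{lemma: mi_definitions_are_equivalent}, being an $\mathbb{N}$-I.M.\ is equivalent to $Cl(\langle B,S \rangle) = \bigcup_{i=0}^{\infty} S^i(B) = \mathbb{N}$, so it suffices to exhibit, for an arbitrary self-loop $S$ and arbitrary $B$, some natural number that never appears in the closure. Equivalently, I would produce a proper subset $G \subsetneq \mathbb{N}$ satisfying conditions (1) and (2) of Definition~\ref{def: N_induction_model}, which directly violates the defining property.

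The key observation driving the argument is that a self-loop function cannot generate anything new. By Definition~\ref{def: self_loop_function}, for every tuple $(x_1,\dots,x_k)$ we have $S(x_1,\dots,x_k) \in \{x_1,\dots,x_k\}$. Hence whenever all the inputs $x_1,\dots,x_k$ lie in a set $G$, the output $S(x_1,\dots,x_k)$ is one of those inputs and therefore already lies in $G$. I would take the candidate set to be $G = Cl_0(\langle B, S\rangle) = B$ itself, or more robustly $G = B$, and verify the two closure conditions: condition (1), $B \subseteq G$, holds trivially by construction; condition (2) holds precisely because the self-loop property forces $S(x_1,\dots,x_k) \in \{x_1,\dots,x_k\} \subseteq G$ whenever the inputs are in $G$. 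Thus $G = B$ is closed under $S$, so $S^i(B) \subseteq B$ for all $i$ and $Cl(\langle B,S\rangle) = B$.

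Since $B \subset \mathbb{N}$ is a proper subset (the definition of an I.M.\ uses $\subset$, and even an $\mathbb{N}$-I.M.\ requires $B$ non-empty but there is always at least one natural number outside any admissible base case unless $B = \mathbb{N}$), we have $G = B \neq \mathbb{N}$, contradicting the requirement that the only set satisfying (1) and (2) be $\mathbb{N}$. Therefore $\langle B, S\rangle$ fails to be an $\mathbb{N}$-I.M. To make the argument fully watertight I would phrase it so as to cover the edge case where $B$ might conceivably equal $\mathbb{N}$: if $B = \mathbb{N}$ then trivially $Cl(\langle B,S\rangle) = \mathbb{N}$, but in that degenerate situation $S$ plays no role at all, so one typically restricts attention to proper base cases; under the paper's convention $B \subset \mathbb{N}$ this is automatic.

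I do not anticipate a serious obstacle here, as the self-loop property does the work almost immediately. The only point requiring care is the bookkeeping around Definition~\ref{def: powers_of_S}: the powers of $S$ are defined by applying $S$ to tuples drawn from $\bigcup_{j=0}^{i-1} S^j(B)$ and then intersecting with $\mathbb{N}$. I must confirm that the self-loop argument is compatible with this inductive construction, namely that if $\bigcup_{j=0}^{i-1} S^j(B) \subseteq B$ then $S^i(B) \subseteq B$ as well, which follows by the same one-line appeal to the self-loop property. An induction on $i$ then yields $S^i(B) \subseteq B$ for every $i$, completing the proof.
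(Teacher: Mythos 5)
Your argument is correct and coincides with the paper's own proof: both take $G = B$, use the self-loop property to verify the two conditions of Definition~\ref{def: N_induction_model}, and conclude that $G = B \neq \mathbb{N}$ contradicts the defining property (relying, as the paper implicitly does, on $B \subset \mathbb{N}$ being proper). Your extra remarks about the closure formulation and the edge case $B = \mathbb{N}$ are harmless elaborations of the same idea.
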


\begin{proof}
Suppose $S$ is a self-loop function. Consider $G=B$.
\begin{enumerate}
\item Clearly, $B \subseteq G$
\item For $x_1, x_2, \ldots x_k \in G (=B)$, $S(x_1,x_2, \ldots, x_k) \in \{x_1,x_2, \ldots x_k\}$. So, $S(x_1,x_2, \ldots, x_k) \in\mathbb{N}$. Clearly, $S(x_1,x_2, \ldots x_k) \in G$.
\end{enumerate}
As $\langle B,S \rangle$ is an $\mathbb{N}$-I.M., we have $G=\mathbb{N}$, which is a contradiction.
\end{proof}

In the next sub-section, we look at the case where there are no restrictions put on the structure of $S$.

\subsection{For any arbitrary $S$}
\label{subsec: general_structure}
We show in Lemma \ref{lemma: every_b_there_is_s} that for every non-empty $B \subset \mathbb{N}$, there exists a non-self-loop function $S$ such that $\langle B,S \rangle$ is an $\mathbb{N}$-I.M. In Lemma \ref{lemma: every_s_there_is_b}, we show that for every non-self-loop function $S$, there exists a $B \subset \mathbb{N}$ such that $\langle B,S \rangle$ is an $\mathbb{N}$-I.M.

\begin{lemma}
\label{lemma: every_b_there_is_s}
For every non-empty $B \subset \mathbb{N}$, there exists a non-self-loop function $S$ such that $\langle B, S \rangle$ is an $\mathbb{N}$-I.M.
\end{lemma}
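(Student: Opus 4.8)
The plan is to construct, for any given non-empty $B \subset \mathbb{N}$, an explicit non-self-loop function $S$ for which $\langle B, S \rangle$ is an $\mathbb{N}$-I.M. By Lemma \ref{lemma: mi_definitions_are_equivalent} it suffices to exhibit $S$ with $Cl(\langle B,S \rangle) = \bigcup_{i=0}^\infty S^i(B) = \mathbb{N}$, which is far more convenient to verify than the closure condition of Definition \ref{def: N_induction_model}. So first I would fix some element $b_0 \in B$ (which exists since $B$ is non-empty), and aim to design a unary $S$ that, starting from $b_0$, visits every natural number under repeated application.

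The cleanest approach is to take $k=1$ and let $S \colon \mathbb{N} \to \mathbb{Z}$ be a unary function whose orbit under iteration enumerates all of $\mathbb{N}$. Concretely, I would define $S$ so that $S(n) = n+1$ for every $n \geq 1$; then starting from any $b_0 \in B$ we obtain $S^i(\{b_0\}) \ni b_0 + i$, but this only reaches $\{b_0, b_0+1, \dots\}$ and misses the numbers below $b_0$. To fix this I would instead route the orbit through $1$ first: for instance define $S(n) = n-1$ for $n > 1$ and $S(1) = $ (something that eventually climbs past all of $B$), or more simply design $S$ to first descend to $1$ and then ascend. A robust choice is a single unary $S$ that realizes a bijective "successor of an enumeration of $\mathbb{N}$" shifted so that $b_0$ lands early; since $B$ may be infinite, the safest universal construction is to pick $b_0 = \min B$ and handle the finitely many gaps explicitly while using $n \mapsto n+1$ above $b_0$, together with a downward step $n \mapsto n-1$ to recover the values $1, 2, \dots, b_0 - 1$.

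Concretely I would define the two-branch map
\begin{equation*}
S(n) = \begin{cases} n-1 & \text{if } n = b_0, \\ n+1 & \text{otherwise,} \end{cases}
\end{equation*}
where $b_0 = \min B$, and then argue that $Cl(\langle B, S\rangle) = \mathbb{N}$: applying the $n+1$ branch repeatedly from $b_0$ yields every integer $\geq b_0$, while applying $S$ once to $b_0$ gives $b_0 - 1$, and applying the $n+1$ branch to everything but re-descending appropriately recovers the finitely many values below $b_0$ (if $b_0 = 1$ no descent is needed). I would verify $S$ is a non-self-loop function by exhibiting a single point where $S(n) \neq n$, which is immediate from either branch, and check that $S(n) \in \mathbb{N}$ wherever needed so the intersection with $\mathbb{N}$ in Definition \ref{def: powers_of_S} does not discard the values I rely on.

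The main obstacle is handling the values strictly below $b_0 = \min B$ in a uniform way, since a pure ascending successor never reaches them; the descent branch must be engineered so that the downward steps terminate at $1$ and then the ascent refills the interval $[1, b_0]$ without getting stuck in a cycle or wandering into $\mathbb{Z} \setminus \mathbb{N}$. If the two-branch map proves awkward to verify, the cleaner fallback is to simply take $b_0 = \min B$ and define $S(n)=n+1$ together with a separate handling that forces $\min B = 1$ is impossible in general, so instead I would let $S$ send $b_0 \mapsto 1$ directly and $n \mapsto n+1$ elsewhere, guaranteeing the orbit $b_0 \to 1 \to 2 \to 3 \to \cdots$ covers all of $\mathbb{N}$; verifying $\bigcup_i S^i(B) = \mathbb{N}$ is then a one-line induction on the ascending chain from $1$.
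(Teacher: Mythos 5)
Your overall strategy---reducing to $Cl(\langle B,S\rangle)=\mathbb{N}$ via Lemma \ref{lemma: mi_definitions_are_equivalent} and building a unary $S$ whose iterates sweep out $\mathbb{N}$---is the right one and matches the paper. But both concrete functions you commit to fail, and for the same reason. Take $B=\{b_0\}$ with $b_0=\min B>1$. Your fallback $S(b_0)=1$, $S(n)=n+1$ otherwise produces the orbit $b_0 \to 1 \to 2 \to \cdots \to b_0-1 \to b_0 \to 1 \to \cdots$, i.e.\ a cycle on $\{1,\ldots,b_0\}$: the only way to generate $b_0+1$ through the ascending branch is to apply $S$ to $b_0$, but you have redirected $b_0$ to $1$, so $b_0+1$ and everything above it is never produced and $Cl(\langle B,S\rangle)=\{1,\ldots,b_0\}\neq\mathbb{N}$. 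The two-branch map $S(b_0)=b_0-1$, $S(n)=n+1$ otherwise is worse: from $\{b_0\}$ its closure is just the two-cycle $\{b_0-1,b_0\}$, reaching neither $b_0+1$ nor anything below $b_0-1$. The paper escapes exactly this trap by adding a third branch, $S(b-1)=b+1$: with $S(b)=1$, $S(b-1)=b+1$, and $S(x)=x+1$ otherwise (where $b=\min B$), the orbit descends to $1$, climbs back up to $b-1$, and then jumps over $b$ to $b+1$, after which the plain successor takes over. You correctly identified "getting stuck in a cycle" as the main obstacle, but the map you then chose is precisely the one that gets stuck, so this is a genuine gap rather than a presentational one.

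A second, smaller omission: you never treat infinite $B$ separately, and your fallback also fails there (e.g.\ $B=\{2,10,20,30,\ldots\}$ yields only $\{1,2\}\cup\{10,11,\ldots\}$, missing $3,\ldots,9$). The paper handles infinite $B$ with backward induction $S:x\to x-1$ (Lemma \ref{lemma: backward_induction_is_valid_form}); alternatively, the corrected three-branch map above works uniformly for any $B$ with $1\notin B$, and the plain successor suffices when $1\in B$.
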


\begin{proof}
We have two cases: either $B$ is a finite set or $B$ is an infinite subset of $\mathbb{N}$.

\textbf{Case 1: When $B$ is finite.} 

If $1 \in B$, we can take $S: x \rightarrow x+1$. As $i \in S^{i-1}(B)$ for each $i \in \mathbb{N}$, $\mathbb{N} \subseteq \bigcup\limits_{i=0}^{\infty} S^i(B)$. As $S^i(B) \subseteq \mathbb{N}$ for all $i \in \mathbb{N} \cup \{0\}$, $\bigcup\limits_{i=0}^{\infty} S^i(B) = \mathbb{N}$. So, due to Theorem \ref{lemma: mi_definitions_are_equivalent}, $\langle B,S \rangle$ is an $\mathbb{N}$-I.M.

If $1 \not\in B$, let $b=min(B)$. Consider
\begin{equation*}
    S(x)=\begin{cases}
           1, & \text{if } x=b \\
           b+1, & \text{if } x=b-1 \\
           x+1, & \text{otherwise}
         \end{cases}
\end{equation*}
$S(b-1)=b+1$ is necessary. Without it, $S$ will take $b-1$ to $b$ and $b$ is mapped to 1. So, we will not be able to generate elements greater than $b$.

Observe that $\{1,2, \ldots, b+1\} \subset \bigcup\limits_{i=0}^{b} S^i(B)$. Also, for each $i \geq b+2$, $i \in S^{i-1}(B)$. So, $\mathbb{N} \subseteq \bigcup\limits_{i=0}^{\infty} S^i(B)$.

\textbf{Case 2: When $B$ is infinite.}

Use $S: x \rightarrow x-1$. This is nothing but the backward induction. The detailed proof can be found in the Appendix (Lemma \ref{lemma: backward_induction_is_valid_form}).
\end{proof}

In the proof of the previous lemma, we used a unary $S$. We can extend it to say that for every $k$, such a $k$-ary $S$ exists.

\begin{restatable}{lemma}{arbitraryExtended}
\label{lemma: every_b_there_is_s_extended_version}
For every non-empty $B \subset \mathbb{N}$, there exists a $k$-ary non-self-loop function $S^\prime: \mathbb{N}^k \rightarrow \mathbb{Z}$, for every $k$, such that $\langle B, S^\prime \rangle$ is an $\mathbb{N}$-I.M.
\end{restatable}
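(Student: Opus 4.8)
The plan is to bootstrap from the unary construction already established in Lemma~\ref{lemma: every_b_there_is_s}, rather than reconstruct a $k$-ary function from scratch. Lemma~\ref{lemma: every_b_there_is_s} gives us a unary non-self-loop function $S$ with $\langle B, S\rangle$ an $\mathbb{N}$-I.M. The idea is to define $S'\colon \mathbb{N}^k \to \mathbb{Z}$ so that it simulates $S$ while genuinely depending on all $k$ arguments in form but behaving, on the relevant tuples, like the unary map applied to (say) the first coordinate. The cleanest route is to set $S'(x_1, x_2, \ldots, x_k) = S(x_1)$, i.e.\ ignore all but the first argument. First I would verify that $\mathrm{Cl}(\langle B, S'\rangle) = \mathrm{Cl}(\langle B, S\rangle)$: since any tuple with all coordinates equal to some already-generated $x_1$ yields $S(x_1)$, the powers $S'^i(B)$ coincide with $S^i(B)$ (using Definition~\ref{def: powers_of_S}, where the $x_j$ need not be distinct, so we may take $x_2 = \cdots = x_k = x_1$). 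Hence by Lemma~\ref{lemma: mi_definitions_are_equivalent}, $\langle B, S'\rangle$ is an $\mathbb{N}$-I.M.

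The remaining obligation is that $S'$ be a \emph{non-self-loop} function in the sense of Definition~\ref{def: self_loop_function}, i.e.\ that there exist some tuple $(x_1, \ldots, x_k)$ with $S'(x_1, \ldots, x_k) \notin \{x_1, \ldots, x_k\}$. Here the naive choice $S'(x_1,\ldots,x_k) = S(x_1)$ is the main obstacle: because $S$ is non-self-loop there is some $a$ with $S(a) \neq a$, but we must ensure $S(a)$ avoids \emph{all} the other coordinates too, which is automatic only if we are free to choose them. I would resolve this by evaluating at the diagonal tuple $(a, a, \ldots, a)$: then $\{x_1,\ldots,x_k\} = \{a\}$ and $S'(a,\ldots,a) = S(a) \neq a$, so the non-self-loop condition holds immediately. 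This confirms $S'$ is non-self-loop without any case analysis.

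One subtlety to flag and dispatch is the degenerate case $B = \{1\}$ with $S\colon x \mapsto x+1$, or the infinite-$B$ backward case $S\colon x\mapsto x-1$; in each the same diagonal argument applies since $S(a)\neq a$ everywhere, so no special handling is needed. Finally I would remark that the construction works uniformly for every $k \geq 1$ (for $k=1$ it is just Lemma~\ref{lemma: every_b_there_is_s} itself), establishing the ``for every $k$'' quantifier in the statement. The proof is therefore short: invoke the unary lemma, lift via projection to the first coordinate, check closure equality through the diagonal tuples, and verify the non-self-loop condition on a single diagonal point.
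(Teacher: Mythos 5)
Your proof is correct and takes essentially the same route as the paper: the paper also lifts the unary $S$ from Lemma~\ref{lemma: every_b_there_is_s} to a $k$-ary function and argues on diagonal tuples, defining $S^\prime(x_1,\ldots,x_k)=S(\min(x_1,\ldots,x_k))$ where you use $S^\prime(x_1,\ldots,x_k)=S(x_1)$. That difference is immaterial for this lemma, since no structural condition (such as the additive requirement $a_i \neq 0$) forces $S^\prime$ to depend on every coordinate.
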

The proof for the above lemma can be found in the Appendix (section \ref{appendix: proof_of_characterisation_section_lemmas}).

\begin{lemma}
\label{lemma: every_s_there_is_b}
For every non-self-loop function $S: \mathbb{N}^k \rightarrow \mathbb{Z}$, there exists a $B \subset \mathbb{N}$ such that $\langle B, S \rangle$ is an $\mathbb{N}$-I.M. So this holds for $S$ with additive and multiplicative structure as well.
\end{lemma}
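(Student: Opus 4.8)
The plan is to exhibit, for any given non-self-loop function $S: \mathbb{N}^k \rightarrow \mathbb{Z}$, an explicit base set $B$ that makes $\langle B, S \rangle$ an $\mathbb{N}$-I.M. By Lemma \ref{lemma: mi_definitions_are_equivalent} it suffices to produce $B$ with $Cl(\langle B,S \rangle) = \bigcup_{i=0}^\infty S^i(B) = \mathbb{N}$. The key observation driving the construction is that since $S$ is \emph{not} a self-loop function, there is at least one input tuple $(x_1,\dots,x_k) \in \mathbb{N}^k$ with $S(x_1,\dots,x_k) \in \mathbb{N}$ and $S(x_1,\dots,x_k) \notin \{x_1,\dots,x_k\}$, i.e.\ $S$ is capable of producing at least one genuinely new natural number from some inputs. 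The obstacle is that a single such tuple is not enough to reach \emph{all} of $\mathbb{N}$; we need $B$ large enough that iterating $S$ sweeps out everything, yet $B \subsetneq \mathbb{N}$ so that it is a legitimate proper base case.

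First I would handle this by making $B$ generous: take $B = \mathbb{N} \setminus \{m\}$ for a single well-chosen element $m$. The goal is then reduced to finding one $m$ such that $m$ itself can be generated by $S$ from elements of $\mathbb{N} \setminus \{m\}$, because every other element is already in $B = S^0(B)$. Concretely, I would search for a tuple $(x_1,\dots,x_k)$ whose output $y = S(x_1,\dots,x_k) \in \mathbb{N}$ satisfies $y \notin \{x_1,\dots,x_k\}$ (such a tuple exists precisely because $S$ is non-self-loop), and set $m = y$. Then $x_1,\dots,x_k$ all lie in $\mathbb{N}\setminus\{m\} = B \subseteq Cl(\langle B,S\rangle)$, so $m = S(x_1,\dots,x_k) \in S^1(B)$, giving $Cl(\langle B,S\rangle) = \mathbb{N}$. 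Since $B = \mathbb{N}\setminus\{m\}$ is a proper subset, this is a valid $\mathbb{N}$-I.M.

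The main obstacle is the clause $m = y \notin \{x_1, \dots, x_k\}$: I must verify that the witnessing tuple has \emph{all} its coordinates different from its output, not merely that the function disagrees with the projection functions somewhere. But the definition of non-self-loop (Definition \ref{def: self_loop_function}) is exactly the statement that for some tuple, $S(x_1,\dots,x_k) \notin \{x_1,\dots,x_k\}$, so the witness is furnished directly and this gap closes immediately. The final sentence of the statement, concerning additive and multiplicative structure, then follows for free: by the Remark following Definition \ref{def: multiplicative_form}, functions with additive or multiplicative structure are never self-loop functions, so they fall under the hypothesis already proved, and no separate argument is required.
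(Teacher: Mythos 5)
Your construction is exactly the paper's: pick a witnessing tuple with $S(x_1,\dots,x_k)=a\notin\{x_1,\dots,x_k\}$ and take $B=\mathbb{N}\setminus\{a\}$, so that $a\in S(B)$ and $Cl(\langle B,S\rangle)=\mathbb{N}$ by Lemma \ref{lemma: mi_definitions_are_equivalent}. One caveat you share with the paper: Definition \ref{def: self_loop_function} only guarantees a witness whose output avoids its inputs, not one whose output lies in $\mathbb{N}$ (consider $S(x)=-x$, which is non-self-loop but never produces a new natural number), so your parenthetical claim that a tuple with $y\in\mathbb{N}$ ``exists precisely because $S$ is non-self-loop'' overstates the definition --- but the paper's own proof makes the same silent assumption, so this is not a defect of your argument relative to theirs.
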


\begin{proof}
As $S$ is a non-self-loop function, $\exists$ $(x_1,x_2, \ldots x_k) \in \mathbb{N}^k$ such that $S(x_1,x_2, \ldots x_k) \not\in \{x_1,x_2, \ldots x_k\}$. Say $S(x_1,x_2, \ldots x_k)=a$. Take $B=\mathbb{N} \setminus \{a\}$. $x_1, x_2, \ldots x_k \in B$ as none of them is equal to $a$. So, $a \in S(B)$, which implies $B \cup S(B) \subseteq \mathbb{N}$. So, $\bigcup\limits_{i=0}^{\infty} S^i(B) = \mathbb{N}$.
\end{proof}

In the proof of Lemma \ref{lemma: every_b_there_is_s}, in the case where $B$ is finite and $1 \not\in B$, we used the following generating function:
\begin{equation*}
    S(x)=\begin{cases}
           1, & \text{if } x=b \\
           b+1, & \text{if } x=b-1 \\
           x+1, & \text{otherwise}
         \end{cases}
\end{equation*}
If we are trying to prove that a property $P(n)$ is true for all $n \in \mathbb{N}$, using induction, it is very unlikely that one would be able to show that $P(b)$$\implies$$P(1)$, $P(b-1)$$\implies$$P(b+1)$ and $P(x)$$\implies$$P(x+1)$ for all other $x$. While trying to prove properties/statements using induction, it could be useful to have some kind of a structure for $S$. In the following sub-sections, we look at $S$ with `additive' and `multiplicative' structures.

\subsection{$S$ with Additive Structure}
\label{subsection: additive_structure}

Let us see for which $B \subset \mathbb{N}$ and $S: \mathbb{N}^k \rightarrow \mathbb{Z}$, $\langle B,S \rangle$ is an $\mathbb{N}$-I.M. The results for a unary $S$ and $k$-ary $(k \geq 2)$ $S$ are different. Let us look at the unary case to start with.

\begin{lemma}
\label{lemma: additive_S_unary}
For a unary function $S : \mathbb{N} \rightarrow \mathbb{Z}$ with additive structure, $\langle B,S \rangle$ is an $\mathbb{N}$-I.M. iff $B$ contains 1 or $B$ is an infinite subset of $\mathbb{N}$.
\end{lemma}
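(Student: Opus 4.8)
The plan is to convert the biconditional into a statement about reachability under iterated application of $S$ and then argue each direction by splitting on whether $S$ moves its argument up or down. First I would apply Lemma~\ref{lemma: mi_definitions_are_equivalent} to replace Definition~\ref{def: N_induction_model} by the concrete requirement $Cl(\langle B,S\rangle)=\bigcup_{i=0}^{\infty}S^i(B)=\mathbb{N}$. Writing the additive structure of Definition~\ref{def: additive_S} in the unary form $S(x)=a_1x+a_0$ with $a_1\neq 0$, I would dispose of the degenerate identity case ($a_1=1,a_0=0$) using Lemma~\ref{lemma: induction_form_cannot_be_self_loop}, and then separate the analysis into an \emph{increasing regime} (where $S$ produces arbitrarily large values on $\mathbb{N}$, i.e.\ $a_1\geq 2$, or $a_1=1$ with $a_0\geq 1$) and a \emph{decreasing regime} ($a_1=1$ with $a_0\leq -1$, or $a_1\leq -1$). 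This dichotomy is the organizing idea of the whole argument.

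For the necessity direction, that $\langle B,S\rangle$ being an $\mathbb{N}$-I.M.\ implies $1\in B$ or $B$ is infinite, I would argue the contrapositive: assume $B$ is finite and $1\notin B$ and show $Cl(\langle B,S\rangle)\neq\mathbb{N}$. Since the closure must then be infinite while $B$ is finite, $S$ is forced to generate arbitrarily large values, which is impossible in the decreasing regime with a finite $B$; hence $S$ lies in the increasing regime. If $a_1=1$ (so $a_0\geq 1$), then $S(x)=x+a_0>x$, so after intersecting with $\mathbb{N}$ as in Definition~\ref{def: powers_of_S} every generated element is at least $\min B\geq 2$; thus $1\notin Cl(\langle B,S\rangle)$ and $\langle B,S\rangle$ is not an $\mathbb{N}$-I.M. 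If instead $a_1\geq 2$, the image of $S$ lies in the single residue class $a_0 \pmod{a_1}$, so every natural number outside that class can appear only inside $B$; as there are infinitely many such numbers, $B$ cannot be finite, contradicting the assumption.

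For the sufficiency direction I would aim to exhibit the closure in each regime through the matching seed. When $1\in B$ and $S$ is increasing (the canonical $S(x)=x+1$), a direct induction shows $n\in Cl_{n-1}(\langle B,S\rangle)$ for every $n$, giving $Cl=\mathbb{N}$; when $B$ is infinite and $S$ is decreasing (backward induction, $S(x)=x-1$), I would use that an infinite subset of $\mathbb{N}$ is unbounded to choose, for each target $n$, an element $b\in B$ with $b\geq n$ and descend from $b$ to $n$ by repeated application of $S$, remaining inside $\mathbb{N}$ at every step. I expect the \textbf{main obstacle} to be precisely this sufficiency direction: making the ascent/descent uniform across all admissible coefficients and reconciling the clean dichotomy ``$1\in B$ or $B$ infinite'' with the arithmetic-progression structure of the orbits, where the interaction between the truncation $\cap\,\mathbb{N}$ and the residue classes modulo $|a_0|$ (respectively $|a_1|$) must be controlled most carefully.
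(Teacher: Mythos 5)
Your proof is correct, and its skeleton --- a monotonicity dichotomy for necessity plus the canonical witnesses $x \mapsto x+1$ and $x \mapsto x-1$ for sufficiency --- matches the paper's. The genuine difference is in the necessity direction, where your case analysis is finer and, in fact, more rigorous than the paper's. The paper argues only that ``if $S$ is increasing, $1$ cannot be generated; if $S$ is decreasing, nothing above $\max(B)$ can be generated''; read literally this has a gap, because a monotonically increasing additive $S$ can output $1$ (e.g.\ $S(x)=x-1$ or $S(x)=2x-3$, both with $a_1 \geq 1$ and $S(2)=1$). Your decomposition repairs this: you group $a_1=1,\ a_0\leq -1$ with the genuinely descending maps (so a finite $B$ yields a closure bounded by $\max(B)$), you invoke the ``$1$ is never generated'' argument only for $a_1=1,\ a_0\geq 1$, where it is actually valid, and you dispose of $a_1\geq 2$ by observing that the image of $S$ lies in a single residue class modulo $a_1$, forcing $B$ to contain the infinitely many naturals outside that class. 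That residue-class argument does not appear in the paper and is the cleanest way to handle $a_1\geq 2$. One recalibration: your closing worry that sufficiency is the ``main obstacle'' is misplaced. As the paper's own proof makes explicit, the backward implication is an existence claim (some unary additive $S$ makes $\langle B,S\rangle$ an $\mathbb{N}$-I.M.), so the two canonical witnesses settle it at once; no uniformity over coefficients and no control of residue classes is needed there.
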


\begin{proof}
$(\Longrightarrow)$ Suppose $\langle B,S \rangle$ is an $\mathbb{N}$-I.M., where $S$ is unary.
A unary $S$ with additive structure is of the form $S: x \rightarrow a_0+a_1x$. Observe that this function is monotonic i.e. it is either increasing or decreasing. If it is increasing, $B$ should contain 1. If $1 \not\in B$, 1 cannot be generated by an increasing function. If $S$ is decreasing, then $B$ has to be an infinite subset of $\mathbb{N}$. Otherwise, if $B$ is finite, all elements greater than $max(B)$ cannot be generated.

$(\Longleftarrow)$ If $B$ contains 1, consider $S: x \rightarrow x+1$. If $B$ is an infinite subset of $\mathbb{N}$, $S: x \rightarrow x-1$ would give us an $\mathbb{N}$-I.M.
\end{proof}

Let us now look at the $k$-ary $(k \geq 2)$ case. In this case, no restrictions are required on $B$. For every non-empty $B$, we can find such an $S$.

\begin{lemma}
\label{lemma: additive_S_every_B_there_is_S}
For every non-empty $B \subset \mathbb{N}$, there exists a $k$-ary $(k \geq 2)$ $S$ with additive structure such that $\langle B,S \rangle$ is an $\mathbb{N}$-I.M.
\end{lemma}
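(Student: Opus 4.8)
The plan is to exhibit, for any non-empty $B$, a single binary ($k=2$) additive function that generates all of $\mathbb{N}$ from $B$, and then invoke Lemma~\ref{lemma: mi_definitions_are_equivalent} to finish. First I would set $b=\min B$, which exists by well-ordering since $B\neq\emptyset$, and propose the candidate $S(x,y)=2x-y+1$. This has additive structure in the sense of Definition~\ref{def: additive_S}, with $a_0=1$, $a_1=2$, $a_2=-1$, and $a_1,a_2\neq 0$, so $k=2\geq 2$ as required. By Lemma~\ref{lemma: mi_definitions_are_equivalent} it then suffices to prove $Cl(\langle B,S\rangle)=\mathbb{N}$; since every $S^i(B)\subseteq\mathbb{N}$ by construction, I only need the inclusion $\mathbb{N}\subseteq Cl(\langle B,S\rangle)$, i.e. that each natural number is produced after finitely many applications of $S$.

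The argument then splits into an \emph{ascending} and a \emph{descending} phase, both driven by the same $S$. For the ascent, the key observation is the diagonal identity $S(n,n)=2n-n+1=n+1$: applying $S$ to $(b,b)$ yields $b+1$, and an easy induction on $i$ shows $b+i\in Cl(\langle B,S\rangle)$ for every $i\geq 0$, so the whole tail $\{b,b+1,b+2,\dots\}$ lies in the closure. For the descent, needed only when $b>1$, I would fix any $m$ with $1\leq m\leq b-1$ and pair the seed $b$ with the value $2b+1-m$. Because $m\leq b-1$ we have $b+2\leq 2b+1-m\leq 2b$, so $2b+1-m$ is one of the tail elements already generated in the ascending phase; then $S(b,\,2b+1-m)=2b-(2b+1-m)+1=m\in\mathbb{N}$, placing $m$ in the closure. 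Combining the two phases gives $\{1,\dots,b-1\}\cup\{b,b+1,\dots\}=\mathbb{N}\subseteq Cl(\langle B,S\rangle)$, which completes the proof. (Although the statement only asks for one $k\geq 2$, the same idea extends to any larger arity as in Lemma~\ref{lemma: every_b_there_is_s_extended_version}, by appending extra variables with nonzero coefficients evaluated on a suitable diagonal.)

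I expect the main obstacle to be the descending phase, or more precisely the insight that a \emph{single} additive function can be made to travel in both directions. A monotone additive map such as $x+y-b+1$ easily generates everything above $b$ but can never dip below its seed, whereas a purely decreasing map cannot climb; the essential trick is to use the reflection-type coefficients $(2,-1)$ so that the diagonal $S(n,n)=n+1$ supplies upward motion while pairing the seed with a sufficiently large already-generated value supplies arbitrarily small outputs. The one point requiring genuine care is verifying that the value $2b+1-m$ used to manufacture $m$ has indeed already been produced, which is guaranteed by the bound $2b+1-m\geq b+2>b$; this keeps the construction self-contained and makes it work uniformly for every non-empty $B$, including singletons.
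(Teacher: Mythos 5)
Your proof is correct and follows essentially the same strategy as the paper's: both exhibit a binary additive function with one positive and one negative coefficient, first generate the upward tail $\{b,b+1,\dots\}$ from a seed, and then plug already-generated tail elements into the negative slot to reach the numbers below the seed. The paper uses $S(x,y)=x-y+(q+1)$ (with $y=q$ for ascent and $y=q+2$ for stepwise descent) while you use $S(x,y)=2x-y+1$ (diagonal for ascent, a direct jump to each $m<b$ for descent), but this is only a cosmetic difference.
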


\begin{proof}
Say $q \in B$. Consider $S: (x,y) \rightarrow x-y+(q+1)$.
Take $y=q$ to get $S(x,q)=x+1$. So, $\{q,q+1,q+2,\ldots\} \subseteq \bigcup\limits_{i=0}^{\infty} S^i(B)$. Now we put $y=q+2$, to get, $S(x,q+2)=x-1$. This implies $\{q-1,q-2, \ldots, 1\} \subseteq \bigcup\limits_{i=3}^{q+1} S^i(B)$. So, $\mathbb{N} \subseteq \bigcup\limits_{i=0}^{\infty} S^i(B)$.
\end{proof}

This lemma can be extended to show that such a $k$-ary $S$ exists for every $k \geq 2$. One might think of using $S^\prime: (x_1,x_2, \ldots, x_k) \rightarrow S(x_1,x_2)$, where $S$ is the generating function used in Lemma \ref{lemma: additive_S_every_B_there_is_S} for proving this statement. But, for $S^\prime$, $a_i=0$ for $i \geq 3$ and hence doesn't have an additive structure.

\begin{restatable}{lemma}{additiveExtended}
\label{lemma: additive_S_every_B_there_is_S_extended_version}
For every non-empty $B \subset \mathbb{N}$ and every $k \geq 2$, there exists a $S$ with additive structure such that $\langle B,S \rangle$ is an $\mathbb{N}$-I.M.
\end{restatable}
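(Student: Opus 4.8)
The plan is to reuse the two-variable construction of Lemma~\ref{lemma: additive_S_every_B_there_is_S} while respecting the extra demand of Definition~\ref{def: additive_S} that every coefficient $a_i$ with $1 \le i \le k$ be nonzero. As the remark preceding this statement points out, the naive lift $S'(x_1,\dots,x_k) = S(x_1,x_2)$ fails precisely because it forces $a_3 = \dots = a_k = 0$. My idea is to keep the variables $x_3,\dots,x_k$ present with nonzero coefficients, but to \emph{evaluate} them at a fixed element of the closure so that their combined contribution collapses to a constant, which can then be folded into $a_0$.

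Concretely, fix $q \in B$ (possible since $B \neq \emptyset$) and define the $k$-ary additive function
\begin{equation*}
S'(x_1,x_2,\dots,x_k) = x_1 - x_2 + x_3 + x_4 + \dots + x_k + a_0, \qquad a_0 = (q+1) - (k-2)q .
\end{equation*}
Every coefficient is nonzero ($a_1 = 1$, $a_2 = -1$, $a_3 = \dots = a_k = 1$) and $a_0 \in \mathbb{Z}$, so $S'$ has additive structure. The point of this choice of $a_0$ is that substituting $x_3 = \dots = x_k = q$ cancels the term $(k-2)q$ against $a_0$, leaving $S'(x,y,q,\dots,q) = x - y + (q+1)$, which is exactly the generating function of Lemma~\ref{lemma: additive_S_every_B_there_is_S}. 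For $k=2$ the construction degenerates to the original one.

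From here the verification that $\langle B, S' \rangle$ is an $\mathbb{N}$-I.M.\ mirrors Lemma~\ref{lemma: additive_S_every_B_there_is_S} and uses Lemma~\ref{lemma: mi_definitions_are_equivalent}: it suffices to show $Cl(\langle B,S'\rangle) = \mathbb{N}$. Since $q$ lies in the closure and repeated arguments are permitted (Definition~\ref{def: powers_of_S}), plugging $y = q$ gives $S'(x,q,q,\dots,q) = x+1$, which generates $\{q, q+1, q+2, \dots\}$ upward from $q$; once $q+2$ is available, plugging $y = q+2$ gives $S'(x,q+2,q,\dots,q) = x-1$, which generates $\{q-1,\dots,1\}$ downward, each step staying in $\mathbb{N}$. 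Hence $\mathbb{N} \subseteq Cl(\langle B,S'\rangle)$, and equality gives the claim.

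I expect the only genuine obstacle to be the one the remark already flags: arranging all of $a_1,\dots,a_k$ to be nonzero \emph{without} destroying the reduction to the already-solved two-variable case. The device of absorbing a fixed substitution into $a_0$ resolves this uniformly in $k$; the remaining closure computations are routine once the two key evaluations yielding $x+1$ and $x-1$ are in hand.
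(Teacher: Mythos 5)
Your proposal is correct and takes essentially the same approach as the paper: both fix $q \in B$ and build a $k$-ary additive function with all of $a_1,\dots,a_k$ nonzero whose coefficients are balanced so that substituting elements of the closure into the extra slots collapses it to the two-variable map $x - y + (q+1)$, after which the evaluations $S'(\cdot) = x+1$ and $S'(\cdot) = x-1$ generate $\mathbb{N}$ exactly as in Lemma~\ref{lemma: additive_S_every_B_there_is_S}. The paper's variant uses coefficients $(1,\dots,1,-(k-1))$ with constant $q+1$, while you use $(1,-1,1,\dots,1)$ and fold the correction $-(k-2)q$ into $a_0$; this is an immaterial difference.
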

The proof for the above lemma can be found in the Appendix (section \ref{appendix: proof_of_characterisation_section_lemmas}).

\begin{remark}
We could also use the following generating function to give an alternate proof for the above lemma.
\begin{equation*}
    S: (x_1,x_2, \ldots, x_k) \rightarrow 2x_1+x_2+x_3+ \ldots + x_{k-1} - (k-1)x_k +1
\end{equation*}
Put $x_2=x$, $x_1=x_3=\ldots=x_k=q$ to get $S(q,x,q,\ldots,q)=x+1$. Then, put $x_1=q,x_2=x,x_3=x_4=\ldots=x_k=(q+1)$ to get $S(q,x,q+1,q+1,\ldots,q+1)=x-1$.
\end{remark}




\subsection{$S$ with Multiplicative Structure}
\label{subsection: multiplicative_structure}

Consider this example which shows a generating function $S$ having a `multiplicative' structure.

\begin{example}
	Let $B=\mathbb{P} \cup \{1\}$ and $S: (x,y) \rightarrow xy$. We can use the fact that every natural number can be written as a product of primes to show that this is an $\mathbb{N}$-I.M. A detailed proof can be found in the Appendix (Lemma \ref{lemma: prime_induction_is_valid_form}).
\end{example}

We will now show that for every $B \subset \mathbb{N}$ containing at least 2 elements, there exists an $S$ with multiplicative structure such that $\langle B,S \rangle$ is an $\mathbb{N}$-I.M. Before that, we will prove a lemma which will be useful for proving this result.

\begin{lemma}
\label{multiplicative_every_B_there_is_S_consecutive}
For every $B \subset \mathbb{N}$ containing two consecutive natural numbers, there exists a non-self-loop function $S$ with multiplicative structure such that $\langle B,S \rangle$ is an $\mathbb{N}$-I.M.
\end{lemma}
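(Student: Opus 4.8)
The plan is to exhibit an explicit binary function with multiplicative structure. The guiding observation is that if $S(x,y)=a\,xy+bx+cy+d$ is multilinear, then fixing the second coordinate at a value $v$ turns $S(\cdot,v)$ into an affine (additive) map of $x$ with slope $av+b$ and intercept $cv+d$. Since $B$ contains two consecutive numbers, say $n$ and $n+1$ (take $n$ to be the smaller one), I can feed these two values into the $y$-slot to obtain two different affine maps of $x$ and thereby mimic the increment/descent idea used for additive structure in Lemma~\ref{lemma: additive_S_every_B_there_is_S}.

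Concretely, I would take $S(x,y)=xy-nx+1$. Setting $y=n+1$ gives $S(x,n+1)=x+1$, an increment, while setting $y=n$ gives $S(x,n)=1$, the constant map to $1$. The verification that $Cl(\langle B,S\rangle)=\mathbb{N}$ then takes two short steps: first, since $n\in B$, applying $S$ to the tuple $(n,n)$ yields $S(n,n)=n^2-n^2+1=1$, so $1\in Cl(\langle B,S\rangle)$; second, since $n+1\in B$, repeatedly applying the increment slice $S(\cdot,n+1)$ starting from $1$ produces $2,3,4,\dots$, so every natural number lies in $Cl(\langle B,S\rangle)$. Invoking Lemma~\ref{lemma: mi_definitions_are_equivalent} then completes the proof that $\langle B,S\rangle$ is an $\mathbb{N}$-I.M. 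It remains to check the two structural requirements: the coefficient of $xy$ is $1\neq 0$, so $S$ genuinely has multiplicative structure in the sense of Definition~\ref{def: multiplicative_form}, and $S$ is non-self-loop because $S(2,2)=5-2n\neq 2$ for every $n\in\mathbb{N}$, exhibiting a tuple on which $S$ escapes its arguments.

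The main obstacle, and the reason a small trick is needed, is the tension between multiplicative structure and additive-style moves. The most natural attempt is to arrange two parallel slices $S(\cdot,n+1)=x+1$ and $S(\cdot,n)=x-1$, copying the increment/decrement pair of the additive case; but requiring both slices to have slope $1$ forces $a(n+1)+b=an+b=1$, hence $a=0$, which destroys the multiplicative structure since the leading coefficient vanishes. The resolution is to abandon the symmetric up/down pair and instead let the second slice be the slope-$0$ map collapsing everything to $1$: this simultaneously keeps $a\neq 0$ and reaches the bottom of $\mathbb{N}$ in a single step, after which the increment slice alone covers all remaining natural numbers.
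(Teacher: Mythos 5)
Your proposal is correct and is essentially the paper's own construction: your $S(x,y)=xy-nx+1=x(y-n)+1$ is exactly the paper's $S(x,y)=xy+y-yq+1=y(x-(q-1))+1$ with the two arguments swapped (taking $q=n+1$), and both proofs proceed by collapsing the pair of smaller elements to $1$ in one step and then using the increment slice fixed at the larger element to generate all of $\mathbb{N}$. The verification of multiplicative structure and the non-self-loop check are also sound, so nothing further is needed.
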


\begin{proof}
Say $q-1,q \in B$. Consider the following $S$:
\begin{equation*}
  S: (x,y) \rightarrow xy+y-yq+1
\end{equation*}
Put $x=y=(q-1)$ to get $S(q-1,q-1)=(q-1)(q-1)+(q-1)-(q-1)q+1=1$. Now, put $x=q$ to get, $S(q,y)=y+1$. As $1 \in S(B)$, $i \in S^i(B)$ $\forall$ $i \in \mathbb{N}$.
\end{proof}

We now prove the main lemma.

\begin{lemma}
\label{multiplicative_every_B_there_is_S_any_two}
For every $B \subset \mathbb{N}$ containing at least two elements, there exists a non-self-loop function $S$ with multiplicative structure such that $\langle B,S \rangle$ is an $\mathbb{N}$-I.M.
\end{lemma}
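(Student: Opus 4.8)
The plan is to exhibit a \emph{single} explicit binary function with multiplicative structure that works for an arbitrary two-element choice from $B$, in the same spirit as Lemma~\ref{multiplicative_every_B_there_is_S_consecutive} but without assuming the two elements are adjacent. Fix two distinct elements $p, q \in B$. The function used in the consecutive case succeeds because it is simultaneously an \emph{incrementer} at one argument value and a \emph{reset to $1$} at the \emph{neighbouring} value $q-1$, which is assumed to lie in $B$. When $p$ and $q$ are far apart no such neighbour is available in $B$, so my idea is to move the reset one step \emph{above} the incrementer, to the value $q+1$: this value is not in $B$, but it is automatically produced by the incrementer itself, so it costs nothing.

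Concretely I would take
\begin{equation*}
	S(x,y) = -xy + (q+1)\,y + 1 = y\,(q+1-x) + 1 .
\end{equation*}
This has multiplicative structure, since the coefficient of $xy$ is $-1 \neq 0$, and it is non-self-loop by the remark following Definition~\ref{def: multiplicative_form} (or directly, $S(q,q)=q+1\neq q$). Two one-line identities carry the whole argument: $S(q,y) = y+1$ for all $y$, so the value $x=q$ acts as an incrementer, and $S(q+1,y) = 1$ for all $y$, so the value $x=q+1$ resets to $1$.

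The argument then proceeds in three steps, all phrased through $Cl(\langle B,S\rangle)$ (Definitions~\ref{def: powers_of_S} and~\ref{def: closure}). First, since $p,q \in B \subseteq Cl(\langle B,S\rangle)$ and $S(q,\cdot)$ is the incrementer, a straightforward induction shows that every integer $\ge \min(p,q)$ lies in the closure. Second, $q+1 \ge \min(p,q)$, so $q+1$ is in the closure and the reset identity gives $S(q+1,p)=1 \in Cl(\langle B,S\rangle)$. Third, feeding $1$ back into the incrementer yields $S(q,1)=2$, $S(q,2)=3,\dots$, so $\{1,2,3,\dots\}=\mathbb{N}\subseteq Cl(\langle B,S\rangle)$; as the closure is always contained in $\mathbb{N}$, we conclude $Cl(\langle B,S\rangle)=\mathbb{N}$, and Lemma~\ref{lemma: mi_definitions_are_equivalent} then certifies that $\langle B,S\rangle$ is an $\mathbb{N}$-I.M.

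The main obstacle is the one the consecutive hypothesis was hiding: the incrementer only moves upward, so the real work is to manufacture an element \emph{below} $\min(p,q)$, and ultimately the element $1$. Placing the reset at $q+1$ is exactly what overcomes this, and the subtlety I would be careful to check is that the reachability is \emph{not circular}: $q+1$ is reached purely by incrementing upward from $p$ or $q$ and requires no prior knowledge of $1$, so the chain ``reach $q+1$'' $\to$ ``reach $1$'' $\to$ ``reach all of $\mathbb{N}$'' is well-founded. The only remaining routine points are that $S$ indeed maps $\mathbb{N}^2$ into $\mathbb{Z}$ and that the negative outputs $S$ produces for large $x$ are harmlessly discarded by the intersection with $\mathbb{N}$ built into Definition~\ref{def: powers_of_S}.
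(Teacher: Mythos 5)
Your proof is correct and follows essentially the same approach as the paper: an explicit bilinear function of the form $y\cdot(\text{linear in }x)+1$ that acts as an incrementer at $x=q$ and as a reset-to-$1$ at a neighbouring $x$-value. The only difference is that the paper reuses the function from Lemma~\ref{multiplicative_every_B_there_is_S_consecutive} (reset at $q-1$, reached by incrementing up from $p$), whereas you place the reset at $q+1=S(q,q)$, which makes the argument self-contained but does not change its substance.
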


\begin{proof}
Say $p, q \in B$, where $p<q$. Consider the same $S$ as in Lemma \ref{multiplicative_every_B_there_is_S_consecutive}. Put $x=q$ to get $S(q,y)=y+1$. So, every $i>p$ can be generated, which implies $q-1$ can also be generated. Now, we can use the same argument as in Lemma \ref{multiplicative_every_B_there_is_S_consecutive}. In this case, $i \in S^{(q-p-1)+i}$ $\forall$ $i \in \mathbb{N}$.
\end{proof}

Like before, we can extend this lemma to say that for every $k \geq 2$, a $k$-ary $S$ with multiplicative structure exists, which together with $B$, gives us an $\mathbb{N}$-I.M.

\begin{restatable}{lemma}{multiplicativeExtended}
\label{multiplicative_every_B_there_is_S_any_two_extended_version}
For every $B \subset \mathbb{N}$ containing at least two elements, there exists a $k$-ary $S$ with multiplicative structure, for every $k$, such that $\langle B,S \rangle$ is an $\mathbb{N}$-I.M.
\end{restatable}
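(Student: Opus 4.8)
The plan is to reduce the $k$-ary case back to the binary function already handled in Lemma~\ref{multiplicative_every_B_there_is_S_any_two}. Fix $p,q \in B$ with $p<q$ and recall the binary generating function $S(x,y) = xy + y - yq + 1 = y\,(x - q + 1) + 1$, for which $\langle B,S\rangle$ is an $\mathbb{N}$-I.M. The naive extension $S'(x_1,\ldots,x_k) := S(x_1,x_2)$ is exactly what I must avoid: it fails Definition~\ref{def: multiplicative_form} because the coefficient of $x_1x_2\cdots x_k$ would be $0$, the same obstruction flagged for the additive extension. To repair this while keeping the binary behaviour recoverable, I would attach one affine factor per extra variable, each shifted so that it collapses to $1$ precisely when its variable is set to $q$.

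Concretely, for $k \ge 2$ I would define
\begin{equation*}
S'(x_1, x_2, \ldots, x_k) = x_2\,(x_1 - q + 1) \prod_{j=3}^{k} (x_j - q + 1) + 1 .
\end{equation*}
First I would check that $S'$ has multiplicative structure: it is a product of factors, each affine in a single distinct variable ($x_2$ alone, $x_1 - q + 1$, and $x_j - q + 1$ for $j \ge 3$), so expanding it yields a multilinear polynomial in which no variable appears to a power above one, and the coefficient of the top monomial $x_1 x_2 \cdots x_k$ is $1 \neq 0$, as Definition~\ref{def: multiplicative_form} demands. Next I would verify the key reduction: setting $x_3 = x_4 = \cdots = x_k = q$ makes every factor $(x_j - q + 1)$ equal to $1$, so that $S'(x_1, x_2, q, \ldots, q) = x_2\,(x_1 - q + 1) + 1 = S(x_1, x_2)$, recovering the binary function exactly (the case $k=2$ is literally $S$, via an empty product).

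With this reduction in hand, the conclusion follows by simulation. Since $q \in B \subseteq Cl(\langle B, S'\rangle)$, the value $q$ is available at every stage to fill the last $k-2$ arguments, so each application of $S$ used in the proof of Lemma~\ref{multiplicative_every_B_there_is_S_any_two} — namely $S(q,y)=y+1$ to climb from $p$ up through $q-1$ and beyond, and $S(q-1,q-1)=1$ to produce $1$ — is mirrored by an application of $S'$ whose filler arguments are all $q$. Hence every natural number generated in the binary run also lies in $Cl(\langle B, S'\rangle)$, giving $Cl(\langle B, S'\rangle) = \mathbb{N}$, and by Lemma~\ref{lemma: mi_definitions_are_equivalent} this makes $\langle B, S'\rangle$ an $\mathbb{N}$-I.M. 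I expect the only genuine obstacle to be the one already dissolved in the construction — securing a nonzero leading coefficient without disturbing the binary dynamics — whereas checking multilinearity and the collapse of the filler factors is routine.
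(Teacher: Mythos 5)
Your proof is correct and follows essentially the same strategy as the paper's: construct a $k$-ary multiplicative polynomial that collapses to the binary generating function of Lemma~\ref{multiplicative_every_B_there_is_S_any_two} when the extra arguments are set to $q \in B$, and then rerun that lemma's simulation with $q$ as the filler value. The only difference is cosmetic --- the paper pads additively with $(x_1-q)x_2\cdots x_k$, which vanishes at $x_1=q$, while you pad multiplicatively with factors $(x_j-q+1)$, which become $1$ at $x_j=q$; both secure a nonzero leading coefficient without disturbing the binary dynamics.
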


The proof for the above lemma can be found in the appendix (section \ref{appendix: proof_of_characterisation_section_lemmas}).




The main results of this section are summarized in the following theorem.

\characterise*
\section{Reduction and Equivalence of Induction Models}
\label{sect:reduction_equivalence}

In this section, we give a definition for reduction and equivalence between I.M.s (Subsection \ref{subsec: reduction_definition}) and we prove a criterion which can be used to determine if one I.M. can be reduced to another or if they are equivalent (Subsection \ref{subsec: reduction_criterion}).

\subsection{Reduction and Equivalence of Induction Models}
\label{subsec: reduction_definition}

Before defining reduction, we need to describe how to obtain an injective version of a generating function and its properties.

\begin{definition}[Smallest power of $S$ for $x$]
\label{def: smallest_power_of_S_x}
For an I.M. $\langle B,S \rangle$, we define 
\begin{equation*}
  l(x,\langle B,S \rangle)=min\{i \geq 0: x \in S^i(B) \}
\end{equation*}
\end{definition}

\begin{definition}[Injective version of $S$]
Consider the I.M. $\langle B,S \rangle$, where $S$ is $k$-ary.

For every $x \in Cl(\langle B,S \rangle) \setminus B$, choose a tuple $\mathbf{n_x}=(n_1,n_2, \ldots, n_k) \in S^{l(x,\langle B,S \rangle)-1}$ such that $S(n_1,n_2, \ldots, n_k)=x$. Then the following is an injective version of $S$.
\begin{align*}
   S_{inj}(\mathbf{n}) & = \begin{cases}
                             S(\mathbf{n}), & \text{ if }         \mathbf{n}=\mathbf{n_x} \text{, for some x } \in Cl(\langle B,S \rangle) \setminus B \\
                             0, & \text{ otherwise}
                           \end{cases}  
\end{align*}
Note that $S_{inj}(\mathbf{n})=0$ if $S(\mathbf{n}) \in B$ or $S(\mathbf{n}) \not\in \mathbb{N}$.
\end{definition}

\begin{remark}
We will use $S_{inj}$ to denote the injective version of a generating function $S$.
\end{remark}

\begin{example}
Unary, additive $S: \mathbb{N} \rightarrow \mathbb{Z}$ are of the form $S(x)=a_0+a_1x$, where $a_0, a_1 \in \mathbb{Z}$. Let $S(x_1) = S(x_2)$. That means $a_0+a_1x_1 = a_0+a_1x_2$ or
$x_1 = x_2$. So $S$ is injective, which means 
\begin{equation*}
    S_{inj}(x)=\begin{cases}
               S(x), & \text{ when } x \in \mathbb{N} \\
               0,    & \text{ otherwise}
               \end{cases}
\end{equation*}
\end{example}

\begin{restatable}{lemma}{smallestpower}
	\label{lemma: smallest_power_of_x_gives_S_inj_power}
	Let $x \in Cl(\langle B,S \rangle) \setminus B$. Then, $l(x,\langle B,S \rangle)=m$ iff $x \in D_m(\langle B,S_{inj} \rangle)$.
\end{restatable}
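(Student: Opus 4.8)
The plan is to reduce the statement to the single identity $l(x,\langle B,S\rangle)=l(x,\langle B,S_{inj}\rangle)$ and then invoke an essentially tautological reformulation of the difference set $D_m$. First I would observe that, directly from Definitions~\ref{def: closure} and~\ref{def: diff_set}, for \emph{any} I.M. $\langle B,T\rangle$ and any $x\in Cl(\langle B,T\rangle)\setminus B$ one has $x\in D_m(\langle B,T\rangle)$ iff $l(x,\langle B,T\rangle)=m$: indeed $x\in D_m(\langle B,T\rangle)$ says exactly that $x\in T^m(B)$ while $x\notin Cl_{m-1}(\langle B,T\rangle)=\bigcup_{i<m}T^i(B)$, which is precisely the assertion that $m$ is the least power of $T$ over $B$ whose image contains $x$. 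Applying this with $T=S_{inj}$ reduces the lemma to proving $l(x,\langle B,S\rangle)=l(x,\langle B,S_{inj}\rangle)$ for every $x\in Cl(\langle B,S\rangle)\setminus B$. Throughout I write $l_S(x)$ and $l_{inj}(x)$ for these two quantities and prove equality by two inequalities.

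For $l_S(x)\le l_{inj}(x)$, I would show by induction on $i$ that $Cl_i(\langle B,S_{inj}\rangle)\subseteq Cl_i(\langle B,S\rangle)$. The base case $i=0$ is $B\subseteq B$. For the step, any $x\in S_{inj}^i(B)$ lies in $\mathbb{N}$, so its witnessing tuple cannot be one of the tuples sent to $0$; hence it is some chosen $\mathbf{n_x}$ with $S_{inj}(\mathbf{n_x})=S(\mathbf{n_x})=x$, and its coordinates lie in $Cl_{i-1}(\langle B,S_{inj}\rangle)\subseteq Cl_{i-1}(\langle B,S\rangle)$ by the inductive hypothesis; by Definition~\ref{def: powers_of_S} this forces $x\in S^i(B)$. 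This containment immediately yields $l_S(x)\le l_{inj}(x)$ for every $x\in Cl(\langle B,S\rangle)$.

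For the reverse inequality $l_{inj}(x)\le l_S(x)$, I would use strong induction on $m:=l_S(x)$. The base case $m=0$ means $x\in B$, so $l_{inj}(x)=0$. For $m\ge1$, the chosen tuple $\mathbf{n_x}=(n_1,\dots,n_k)$ satisfies $S(\mathbf{n_x})=x$ with each coordinate drawn from $Cl_{m-1}(\langle B,S\rangle)$, i.e. $l_S(n_j)\le m-1$ for every $j$ (this is what it means for $\mathbf{n_x}$ to come from level $l_S(x)-1$ in Definition~\ref{def: powers_of_S}). By the induction hypothesis each $l_{inj}(n_j)\le m-1$, so every $n_j\in Cl_{m-1}(\langle B,S_{inj}\rangle)$; since $S_{inj}(\mathbf{n_x})=x\in\mathbb{N}$, Definition~\ref{def: powers_of_S} gives $x\in S_{inj}^m(B)$, whence $l_{inj}(x)\le m$. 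Combining the two inequalities gives $l_S(x)=l_{inj}(x)$, and the reformulation of the first paragraph then finishes the proof.

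I expect the main obstacle to be the forward inequality $l_{inj}(x)\le l_S(x)$, because $S_{inj}$ retains only the single preimage $\mathbf{n_x}$ of each $x$ and maps every other tuple to $0$; one must check that discarding all alternative preimages does not delay the first appearance of $x$. The clean way around this is exactly the strong induction above, which propagates the property ``$S_{inj}$ reaches each element no later than $S$'' from the coordinates $n_j$ (available by level $m-1$ under $S_{inj}$) up to $x$ itself. A minor point worth stating explicitly is that $S_{inj}(\mathbf{n})\in\mathbb{N}$ forces $\mathbf{n}$ to be a chosen tuple, since every other tuple is sent to $0\notin\mathbb{N}$; this is what makes the two closures coincide rather than the $S_{inj}$-closure being strictly smaller.
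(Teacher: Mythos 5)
Your proof is correct and follows essentially the same route as the paper's: your strong induction establishing $l_{inj}(x)\le l_S(x)$ is the paper's forward direction (the paper inducts on $l(x,\langle B,S\rangle)$ and exhibits a coordinate of the witnessing tuple at level exactly $m$), and your containment $Cl_i(\langle B,S_{inj}\rangle)\subseteq Cl_i(\langle B,S\rangle)$ is the inclusion $S_{inj}^i(B)\subseteq S^i(B)$ that the paper invokes for the converse. Your packaging via the identity $l(x,\langle B,S\rangle)=l(x,\langle B,S_{inj}\rangle)$ and the strong induction over \emph{all} coordinates is, if anything, a slightly more careful write-up of the same argument.
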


\begin{proof}
The proof can be found in the appendix (section \ref{appendix: proof_of_reduction_section_lemmas}).
\end{proof}

\begin{restatable}{lemma}{powerinjective} 
	\label{lemma: powers_of_injective_version_of_S}
	Let $\langle B,S \rangle$ be an I.M. Then, ${S^i}_{inj}(B) = S^i(B) \setminus B$ for all $i \geq 1$.
\end{restatable}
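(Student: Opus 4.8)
The plan is to prove the identity by induction on $i$, carrying along the auxiliary observation that the two closures agree level by level, namely that $Cl_i(\langle B,S_{inj}\rangle)=Cl_i(\langle B,S\rangle)$ as soon as the claimed identity is known up to level $i$. Before the induction I would record three structural facts. First, the powers of any generating function are nested above level $0$: since $Cl_{j-1}\subseteq Cl_j$, the definition of $S^{j}$ gives $S^{j}(B)\subseteq S^{j+1}(B)$ for all $j\geq 1$, and likewise for $S_{inj}$; consequently $Cl_m(\langle B,S\rangle)=B\cup S^m(B)$, and for $x\notin B$ one has $x\in S^m(B)$ iff $l(x,\langle B,S\rangle)\leq m$. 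Second, because $S_{inj}$ sends every tuple either to an element of $Cl(\langle B,S\rangle)\setminus B$ (precisely when the tuple is the chosen representative $\mathbf{n_y}$ of some $y$) or to $0\notin\mathbb{N}$, every power ${S^i}_{inj}(B)$ with $i\geq 1$ is contained in $Cl(\langle B,S\rangle)\setminus B$; in particular it is disjoint from $B$. Third, the representatives are level-respecting: the construction places $\mathbf{n_y}\in (Cl_{l(y)-1}(\langle B,S\rangle))^k$, and distinct $y$ receive distinct representatives.

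With these in hand, the base case is direct: ${S^1}_{inj}(B)$ consists exactly of those $y\in Cl(\langle B,S\rangle)\setminus B$ whose representative $\mathbf{n_y}$ lies in $B^k=(Cl_0(\langle B,S\rangle))^k$, and this happens iff $l(y)=1$, which by the first fact equals $S^1(B)\setminus B$. For the inductive step I would use the induction hypothesis to rewrite $Cl_i(\langle B,S_{inj}\rangle)=B\cup\bigcup_{j=1}^{i}(S^j(B)\setminus B)=Cl_i(\langle B,S\rangle)$, so that ${S^{i+1}}_{inj}(B)=\{\,y\in Cl(\langle B,S\rangle)\setminus B : \mathbf{n_y}\in (Cl_i(\langle B,S\rangle))^k\,\}$. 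The key equivalence is then $\mathbf{n_y}\in (Cl_i(\langle B,S\rangle))^k$ iff $l(y)\leq i+1$: forward because $y=S(\mathbf{n_y})\in S^{i+1}(B)$ forces $l(y)\leq i+1$, and backward because $\mathbf{n_y}$ was chosen inside $(Cl_{l(y)-1}(\langle B,S\rangle))^k\subseteq (Cl_i(\langle B,S\rangle))^k$ whenever $l(y)-1\leq i$. Combining this with the characterization of $S^m(B)$ from the first fact yields ${S^{i+1}}_{inj}(B)=S^{i+1}(B)\setminus B$, completing the induction.

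Alternatively, I could shortcut the whole induction using Lemma~\ref{lemma: smallest_power_of_x_gives_S_inj_power}. Nesting together with disjointness from $B$ give ${S^i}_{inj}(B)=Cl_i(\langle B,S_{inj}\rangle)\setminus B=\bigcup_{m=1}^{i}D_m(\langle B,S_{inj}\rangle)$, and that lemma identifies each $D_m(\langle B,S_{inj}\rangle)$ with $\{\,x\in Cl(\langle B,S\rangle)\setminus B : l(x,\langle B,S\rangle)=m\,\}$; unioning over $m\leq i$ and applying the first fact again gives exactly $S^i(B)\setminus B$. I would likely present the self-contained induction as the main argument and mention this as a remark, since it makes the role of the earlier lemma transparent.

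The main obstacle in either route is the bookkeeping around the representative choice: one must verify that $S_{inj}$ introduces each new element $y$ at precisely the level $l(y)$ at which $S$ first produces it, neither earlier (ruled out because $\mathbf{n_y}$ lands in $Cl_{l(y)-1}$ and no lower, so $y$ cannot appear before its representative is fully available) nor later (ruled out because that representative does become available once $Cl_{l(y)-1}$ is built). Once this ``no drift in levels'' phenomenon is pinned down, the remaining manipulations are routine set algebra using the nesting of powers and the disjointness of ${S^i}_{inj}(B)$ from $B$.
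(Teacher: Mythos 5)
Your argument is correct, and it reaches the paper's conclusion by a mildly different route. The paper also inducts on $i$, but it splits the identity into two inclusions: the containment ${S^i}_{inj}(B)\subseteq S^i(B)\setminus B$ follows from the monotonicity $S_{inj}(A)\subseteq S(A)$ together with ${S^i}_{inj}(B)\cap B=\emptyset$, while the reverse containment is obtained by citing Lemma~\ref{lemma: smallest_power_of_x_gives_S_inj_power}: an element $x\in S^{i}(B)\setminus B$ has $l(x,\langle B,S\rangle)=m\leq i$, hence lies in $D_m(\langle B,S_{inj}\rangle)\subseteq {S^m}_{inj}(B)\subseteq {S^i}_{inj}(B)$ by nesting. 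Your main argument instead computes ${S^i}_{inj}(B)$ exactly as the set of $y$ whose chosen representative $\mathbf{n_y}$ has become available by level $i-1$, and identifies that condition with $l(y,\langle B,S\rangle)\leq i$; this is self-contained and makes explicit the ``each element enters at exactly level $l(y)$'' phenomenon that the paper delegates to Lemma~\ref{lemma: smallest_power_of_x_gives_S_inj_power}. Your sketched alternative via the $D_m$ decomposition is essentially the paper's proof. The self-contained version buys independence from Lemma~\ref{lemma: smallest_power_of_x_gives_S_inj_power} (whose own inductive step quietly relies on facts very close to the present statement, so avoiding it removes any worry of circularity), at the cost of slightly more bookkeeping about the representatives; the paper's version is shorter once that lemma is granted.
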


\begin{proof}
The proof can be found in the appendix (section \ref{appendix: proof_of_reduction_section_lemmas}).
\end{proof}

\begin{proposition}
\label{prop: closure_injective_version}
For any I.M. $\langle B,S \rangle$, we have $Cl(\langle B,S \rangle)=Cl(\langle B,S_{inj} \rangle)$.
\end{proposition}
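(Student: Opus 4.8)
The plan is to reduce the statement entirely to Lemma~\ref{lemma: powers_of_injective_version_of_S}, which already controls the individual powers ${S^i}_{inj}(B)$. First I would peel off the zeroth power from each closure: by Definition~\ref{def: closure}, $Cl(\langle B,S \rangle)=\bigcup_{i=0}^{\infty} S^i(B)=B\cup\bigcup_{i=1}^{\infty} S^i(B)$, and likewise $Cl(\langle B,S_{inj} \rangle)=B\cup\bigcup_{i=1}^{\infty}{S^i}_{inj}(B)$, using $S^0(B)={S^0}_{inj}(B)=B$. This isolates exactly the part of the two closures that can differ, since $S_{inj}$ was deliberately defined to never emit an element of $B$ (it outputs $0\notin\mathbb{N}$ whenever $S$ would land in $B$), so the zeroth term must be reinstated by hand.

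Second, I would substitute the lemma. For every $i\geq 1$ we have ${S^i}_{inj}(B)=S^i(B)\setminus B$, hence $\bigcup_{i=1}^{\infty}{S^i}_{inj}(B)=\bigcup_{i=1}^{\infty}\bigl(S^i(B)\setminus B\bigr)$. The only genuine manipulation is the distributive set identity $\bigcup_{i\geq 1}\bigl(S^i(B)\setminus B\bigr)=\bigl(\bigcup_{i\geq 1}S^i(B)\bigr)\setminus B$, which is immediate from the definition of set difference. Reinstating $B$ then collapses the difference, because $B\cup\bigl(C\setminus B\bigr)=B\cup C$ for any set $C$; taking $C=\bigcup_{i\geq 1}S^i(B)$ gives $Cl(\langle B,S_{inj} \rangle)=B\cup\bigcup_{i\geq 1}S^i(B)=\bigcup_{i=0}^{\infty}S^i(B)=Cl(\langle B,S \rangle)$, as required.

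I do not expect a genuine obstacle here: all of the substantive work has already been absorbed into Lemma~\ref{lemma: powers_of_injective_version_of_S}, and what remains is elementary set algebra. The one point demanding care is to \emph{not} apply the lemma at $i=0$ (where it fails, since $S_{inj}$ strips the base elements), which is precisely why I separate the $B$ term before invoking it. An alternative, lemma-free route would be to prove the two inclusions directly: $Cl(\langle B,S_{inj} \rangle)\subseteq Cl(\langle B,S \rangle)$ is clear because every value $S_{inj}$ produces is either $0$ or a genuine value of $S$, whereas the reverse inclusion would require an induction on $l(x,\langle B,S \rangle)$ that reconstructs each $x$ through its chosen preimage tuple $\mathbf{n_x}$. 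Since that route merely re-derives the lemma, routing through Lemma~\ref{lemma: powers_of_injective_version_of_S} is the economical choice.
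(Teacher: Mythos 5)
Your proposal is correct and follows essentially the same route as the paper's own proof: both invoke Lemma~\ref{lemma: powers_of_injective_version_of_S} for $i \geq 1$, separate out the $i=0$ term $B$, and conclude by the elementary identity $B \cup \bigl(\bigcup_{i \geq 1} S^i(B) \setminus B\bigr) = \bigcup_{i=0}^{\infty} S^i(B)$. Your write-up is in fact slightly more careful than the paper's, since it makes explicit why the lemma must not be applied at $i=0$.
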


\begin{proof} 
By definition, $Cl(\langle B,S_{inj} \rangle) = \bigcup\limits_{i=0}^{\infty} {S^i}_{inj}(B)$. As ${S^i}_{inj}(B) = S^i(B) \setminus B$ for all $i \geq 1$,
$ \bigcup\limits_{i=0}^{\infty} {S^i}_{inj}(B)
  = \bigcup\limits_{i=1}^{\infty} \left[S^i(B) \setminus B \right]  
      \bigcup B
  = \bigcup\limits_{i=0}^{\infty} S^i(B) = Cl(\langle B,S \rangle)$
\end{proof}

\begin{lemma}
If $\langle B,S \rangle$ is an $\mathbb{N}$-I.M., then $\langle B,S_{inj} \rangle$ is also an $\mathbb{N}$-I.M.
\end{lemma}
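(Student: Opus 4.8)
The plan is to reduce this statement directly to the two results that immediately precede it, since both the hypothesis and the conclusion can be re-expressed purely in terms of closures. First I would invoke Lemma~\ref{lemma: mi_definitions_are_equivalent}, which tells us that $\langle B,S \rangle$ being an $\mathbb{N}$-I.M.\ is logically equivalent to the closure condition $Cl(\langle B,S \rangle)=\mathbb{N}$. Thus the hypothesis of the lemma is nothing more than the assertion $Cl(\langle B,S \rangle)=\mathbb{N}$, and the conclusion we seek, namely that $\langle B,S_{inj} \rangle$ is an $\mathbb{N}$-I.M., is by the same equivalence nothing more than $Cl(\langle B,S_{inj} \rangle)=\mathbb{N}$.

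The key bridge between these two closure statements is Proposition~\ref{prop: closure_injective_version}, which establishes $Cl(\langle B,S \rangle)=Cl(\langle B,S_{inj} \rangle)$ for \emph{any} I.M. Chaining these together, I would argue: assume $\langle B,S \rangle$ is an $\mathbb{N}$-I.M.; by Lemma~\ref{lemma: mi_definitions_are_equivalent} this gives $Cl(\langle B,S \rangle)=\mathbb{N}$; by Proposition~\ref{prop: closure_injective_version} the closures coincide, so $Cl(\langle B,S_{inj} \rangle)=Cl(\langle B,S \rangle)=\mathbb{N}$; and applying the reverse direction of Lemma~\ref{lemma: mi_definitions_are_equivalent} to $\langle B,S_{inj} \rangle$ yields that $\langle B,S_{inj} \rangle$ is an $\mathbb{N}$-I.M.

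In truth there is no genuine obstacle here: the statement is an immediate corollary, and essentially all of the real work has already been carried out in Lemma~\ref{lemma: powers_of_injective_version_of_S} (which underpins Proposition~\ref{prop: closure_injective_version} via the identity ${S^i}_{inj}(B)=S^i(B)\setminus B$) and in the equivalence of Lemma~\ref{lemma: mi_definitions_are_equivalent}. If anything, the only subtlety worth double-checking is that Proposition~\ref{prop: closure_injective_version} is stated for arbitrary I.M.s with no standing assumption that the closure equals $\mathbb{N}$, so its application to our $\langle B,S \rangle$ is unconditional; having confirmed that, the proof is a two-line composition of the preceding two results.
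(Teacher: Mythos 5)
Your proposal is correct and follows exactly the paper's own argument, which also derives the lemma as an immediate consequence of Proposition~\ref{prop: closure_injective_version} and Lemma~\ref{lemma: mi_definitions_are_equivalent}. Your version merely spells out the chaining of the two results that the paper leaves implicit.
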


\begin{proof}
It follows from Proposition \ref{prop: closure_injective_version} and Lemma \ref{lemma: mi_definitions_are_equivalent}.
\end{proof}

We now give a definition for reduction between Induction Models.
\begin{definition}
\label{def: reduction}
Let $\langle B_1, S_1 \rangle$ and $\langle B_2, S_2 \rangle$ be two I.M.s. $\langle B_1, S_1 \rangle$ can be reduced to $\langle B_2, S_2 \rangle$ if there exists a relation $R: Cl(\langle B_2,S_2 \rangle) \rightarrow 2^{Cl(\langle B_1,S_1 \rangle)}$ such that:

\begin{enumerate}
    \item $\bigcup\limits_{x \in Cl(\langle B_2,S_2 \rangle)} R(x) = 
          Cl(\langle B_1,S_1 \rangle)$
    \item $\bigcup\limits_{x \in B_2} R(x) = B_1$
    \item If $x \in Cl(\langle B_2,S_2 \rangle) \setminus B_2$, we have
          $x=S_{2_{inj}}(n_1,n_2, \dots,n_k)$ where $(n_1,n_2, \dots, n_k) \in \mathbb{N}^k$. We define
          \begin{equation*}
            R(x)=S_1(\bigcup\limits_{i=1}^{k_2} R(n_i)) \cup \left[ \bigcup\limits_{i=1}^{k_2} R(n_i)) \right]
          \end{equation*}
\end{enumerate}
\end{definition}

An example that motivates this definition has been given in the appendix (Section \ref{sect: motivation_reduction_definition}).

In the above definition, in (3), $S_1$ (a $k_1$-ary function) acts on the set $\bigcup\limits_{i=1}^{k_2} R(n_i)$. This is defined even if this set contains less than $k_1$ elements. $S_1$ can act on a tuple $\mathbf{n}=(x_1,x_2, \ldots, x_{k_1})$ even if $x_i$s are not all distinct (see Definition \ref{def: powers_of_S}).

\begin{remark}
Let $A$ and $B$ be two sets. Then to denote $x \rightarrow B \text{ or } R(x)=B$ for each $x \in A$, we will use $A \rightarrow B$ or $R(A)=B$.
\end{remark}

\begin{example}
\label{ex: reduction_first}
Consider the following $\mathbb{N}$-I.M.s: $\langle B_1,S_1 \rangle = \langle \mathbb{P},x \rightarrow x-1 \rangle$ and $\langle B_2,S_2 \rangle = \langle \{1,2,3,4,5\},x \rightarrow x+5 \rangle$. Recall that $\mathbb{P}$ denotes the set of primes. We will show that $\langle B_1,S_1 \rangle$ can be reduced to $\langle B_2,S_2 \rangle$. Notice that $S_2$ is injective. So, $S_{2_{inj}}=S_2$. Consider the following relation, $R$:
\begin{align*}
& \{1,2,3,4,5\} \rightarrow \mathbb{P} \\
& \{5n+1,5n+2,5n+3,5n+4,5n+5\} \rightarrow \{p-i: 1 \leq i \leq n, p \in \mathbb{P}\} \cap \mathbb{N}    
\end{align*}
Here, $Cl(\langle B_1,S_1 \rangle)=Cl(\langle B_2,S_2 \rangle)=\mathbb{N}$.
\begin{enumerate}
    \item $\bigcup\limits_{x \in Cl(\langle B_2,S_2 \rangle)} R(x) = \bigcup\limits_{x \in \mathbb{N}} \{p-i: 1 \leq i \leq x, p \in \mathbb{P}\} \cap \mathbb{N} = \mathbb{N}$
    \item $\bigcup\limits_{x \in \{1,2,3,4,5\}} R(x) = \mathbb{P}$
    \item Let $x \in \mathbb{N} \setminus B_2$. Then $x=5a+b$, where $a>0$       and $1 \leq b \leq 5$. We have $x=S_2(5(a-1)+b)$.
          \begin{align*}
            & S_1(R(5(a-1)+b)) \cup R(5(a-1)+b) \\
            & = \left[ S_1(\{p-i: 1 \leq i \leq (a-1), p \in  
                \mathbb{P}\}) \cup \{p-i: 1 \leq i \leq (a-1), p \in \mathbb{P}\} \right] \cap \mathbb{N} \\
            & = \left[ \{p-i: 2 \leq i \leq a, p \in \mathbb{P}\} \cup
                \{p-i: 1 \leq i \leq (a-1) \in \mathbb{P}\} \right] \cap \mathbb{N} \\
            & = \left[ \{p-i: 1 \leq i \leq a, p \in \mathbb{P}\} \right]
                \cap \mathbb{N} \\
            & = R(5a+b)
\end{align*}
\end{enumerate}
\end{example}

\begin{example}
\label{ex: reduction_does_not_imply_induction_form}
Suppose $\langle B_1,S_1 \rangle$ can be reduced to $\langle B_2,S_2 \rangle$. If $\langle B_2,S_2 \rangle$ is an $\mathbb{N}$-I.M., then does it imply that $\langle B_1,S_1 \rangle$ is also an $\mathbb{N}$-I.M.?

The answer is no. Let $\langle B_1,S_1 \rangle = \langle \{2\},x \rightarrow x+2 \rangle$ and $\langle B_2,S_2 \rangle = \langle \{1\},x \rightarrow x+1 \rangle$. Notice that $S_1$ and $S_2$ are injective. Consider the following relation: $R(x)=\{2,4, \ldots, 2(x-1), 2x\}$. We get $Cl(\langle B_1,S_1 \rangle)=\{2n: n \in \mathbb{N}\}$ and $Cl(\langle B_2,S_2 \rangle)=\mathbb{N}$. Conditions 1 \& 2 (in Definition \ref{def: reduction}) are clearly true. To see 3), suppose $x \neq 1$ and $x \in Cl(\langle B_2,S_2 \rangle)$. We have $x=S_2(x-1)$.
\begin{align*}
   S_1(R(x-1)) \cup R(x-1) 
   & = S_1(\{2,4, \ldots, 2(x-1)\}) \cup \{2,4, \ldots, 2(x-1)\} \\
   & = \{4,6, \ldots, 2x\} \cup \{2,4, \ldots, 2(x-1)\} \\
   & = \{2,4, \ldots, 2(x-1)\} = R(x)
\end{align*}
So, $\langle B_1,S_1 \rangle$ can be reduced to $\langle B_2,S_2 \rangle$. Here, $\langle B_2,S_2 \rangle$ is an $\mathbb{N}$-I.M., but $\langle B_1,S_1 \rangle$ is not.
\end{example}

\begin{definition}
\label{def: equivalence_of_induction_models}
Two I.M.s $\langle B_1,S_1 \rangle$ and $\langle B_2,S_2 \rangle$ are said to be equivalent if:
\begin{enumerate}
    \item $\langle B_1,S_1 \rangle$ can be reduced to $\langle
                B_2,S_2 \rangle$
    \item $\langle B_2,S_2 \rangle$ can be reduced to $\langle
                B_1,S_1 \rangle$                    
\end{enumerate}
\end{definition}

\begin{example}
\label{ex: equivalence_first}
In Example \ref{ex: reduction_first}, we showed that $\langle B_1,S_1 \rangle$ can be reduced to $\langle B_2,S_2 \rangle$. We can also show that $\langle B_2,S_2 \rangle$ can be reduced to $\langle B_1,S_1 \rangle$ which implies that $\langle B_1,S_1 \rangle$ and $\langle B_2,S_2 \rangle$ are equivalent. (For details, see \ref{Appendix: equivalence_example}).
\end{example}

\begin{example}
\label{ex: equivalence_second}
In Example \ref{ex: reduction_does_not_imply_induction_form}, we can use $R(x)=\{x/2,x/2-1, \ldots, 1\}$ to show that $\langle B_2,S_2 \rangle$ can be reduced to $\langle B_1,S_1 \rangle$. So, $\langle \{2\},x \rightarrow x+2 \rangle$ and $\langle \{1\},x \rightarrow x+1 \rangle$ are equivalent.
\end{example}

In the next example, we present an I.M. which can be reduced to $\langle B_0,S_0 \rangle$, but $\langle B_0,S_0 \rangle$ cannot be reduced to that I.M.

\begin{example}
\label{ex: equivalence_third}
Let $B=\mathbb{N} \setminus \{2\}$.
\begin{equation*}
   S(x) = \begin{cases}
             10,  & \text{ when } x = 1 \text{ or } 5 \\
             x-1, & \text { otherwise}
          \end{cases}
\end{equation*}
Consider the following relation, $R$:
\begin{align*}x \rightarrow Cl_{x-1}(\langle B,S \rangle)\end{align*}
\begin{enumerate}
  \item $\bigcup\limits_{x \in \mathbb{N}} R(x) =  
        Cl_{\infty}(\langle B,S \rangle)=Cl(\langle B,S \rangle)$ \\
        $S^0(B)=\mathbb{N} \setminus \{2\}$. $S(3)=2$. So, $\{2\} \subseteq S(B)$ which gives us $\bigcup\limits_{x \in \mathbb{N}} R(x) = \mathbb{N}$.
  \item $\bigcup\limits_{x \in \{1\}} R(x) = B$
  \item $S_0$ is injective. For $x \in \mathbb{N} \setminus \{1\}$, 
        $x=S(x-1)$.
        \begin{align*}
          S(R(x-1)) \cup R(x-1) 
          & = S(Cl_{x-2}(\langle B,S \rangle) \bigcup Cl_{x-2}(\langle B,S \rangle) \\
          & = S^{x-1}(B) \bigcup Cl_{x-2}(\langle B,S \rangle)  = Cl_{x-1}(\langle B,S \rangle) = R(x)
        \end{align*}
\end{enumerate}
So, $\langle B,S \rangle$ can be reduced to $\langle B_0,S_0 \rangle$.

Let us now see if $\langle B_0,S_0 \rangle$ can be reduced to $\langle B,S \rangle$. For a relation, $R$, satisfying Definition \ref{def: reduction} to exist, we need $R(n)=1$ for $n \in \mathbb{N} \setminus \{2\}$ (from the second condition) and $R(2)=\mathbb{N} \setminus \{1\}$ (from first condition). The $S_{inj}$ is given by:
\begin{equation*}
  S_{inj}=\begin{cases}
            x-1, & \text{ for } x=3 \\
            0,   & \text{ otherwise}
          \end{cases}
\end{equation*}
From the third condition, as $2=S_{inj}(3)$, we have $R(2) = S_0(R(3)) \cup R(3) = S_0(1) \cup \{1\} = \{1,2\} \neq \mathbb{N} \setminus \{1\}$. So, such an $R$ does not exist.
\end{example}

\subsection{Checking Reducibility and Equivalence of Two Induction Models}
\label{subsec: reduction_criterion}
In this section, we present a criterion to determine if one I.M. can be reduced to another (Theorem \ref{theorem: equivalent_condition_reduction}). It immediately follows from this theorem that every I.M. can be reduced to the basic model of induction. Now, we will give a few definitions and lemmas which will be useful in proving that result.

In Example \ref{ex: equivalence_third}, $B \cup S(B)=\mathbb{N}$. Whereas for the first principle of induction i.e. $\langle B_0,S_0 \rangle=\langle \{1\},x \rightarrow x+1 \rangle$, $S_0$ needs to be applied infinitely on $B_0$ to obtain $\mathbb{N}$. We formally define this in the next definition.

\begin{definition}[Number of Steps of an I.M.]
\label{def: no_of_steps_induction_model}
For an I.M. $\langle B,S \rangle$, we define
\begin{equation*}
    n(\langle B,S \rangle)= min\{i \geq 1: D_i(\langle B,S \rangle) =\emptyset\}
\end{equation*}
\end{definition}

\begin{restatable}{lemma}{noNewElements}
\label{lemma: no_new_elements_generated}
Let $\langle B,S \rangle$ be an I.M. Then, $D_i(\langle B,S \rangle)=\emptyset$ $\forall$ $i \geq n(\langle B,S \rangle)$.
\end{restatable}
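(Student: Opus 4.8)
The plan is to show that once the closure $Cl_i(\langle B,S \rangle)$ stops growing it can never grow again, so that $D_i$ remains empty forever after the first time it vanishes. The crucial observation, which I would isolate first, is that each power $S^{i}(B)$ depends only on the closure up to the previous stage: directly from Definition \ref{def: powers_of_S},
\begin{equation*}
S^{i}(B) = \left\{ S(x_1, \dots, x_k) : x_1, \dots, x_k \in Cl_{i-1}(\langle B,S \rangle) \right\} \cap \mathbb{N},
\end{equation*}
since $\bigcup_{j=0}^{i-1} S^j(B)$ is exactly $Cl_{i-1}(\langle B,S \rangle)$ by Definition \ref{def: closure}. Hence $S^{i}(B)$ is entirely determined by the single set $Cl_{i-1}(\langle B,S \rangle)$.

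Next I would record the equivalent reformulation of $D_m = \emptyset$. By Definition \ref{def: diff_set}, $D_m(\langle B,S \rangle) = S^m(B) \setminus Cl_{m-1}(\langle B,S \rangle) = \emptyset$ says precisely that $S^m(B) \subseteq Cl_{m-1}(\langle B,S \rangle)$; and since $Cl_m(\langle B,S \rangle) = Cl_{m-1}(\langle B,S \rangle) \cup S^m(B)$, this is equivalent to the stabilisation condition $Cl_m(\langle B,S \rangle) = Cl_{m-1}(\langle B,S \rangle)$.

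Putting these together, the proof proceeds by induction on $i \geq n(\langle B,S \rangle)$. Write $n = n(\langle B,S \rangle)$. The base case $i = n$ holds by the very definition of $n$ as the minimal index with $D_n = \emptyset$. For the inductive step, suppose $D_i(\langle B,S \rangle) = \emptyset$ for some $i \geq n$; by the reformulation above this gives $Cl_i(\langle B,S \rangle) = Cl_{i-1}(\langle B,S \rangle)$. Feeding this equality into the formula for $S^{i+1}(B)$ from the first step yields $S^{i+1}(B) = S^i(B)$, because the defining set of arguments $Cl_i(\langle B,S \rangle)$ coincides with $Cl_{i-1}(\langle B,S \rangle)$. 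But $S^i(B) \subseteq Cl_{i-1}(\langle B,S \rangle) = Cl_i(\langle B,S \rangle)$ (again using $D_i = \emptyset$), so $S^{i+1}(B) \subseteq Cl_i(\langle B,S \rangle)$, i.e. $D_{i+1}(\langle B,S \rangle) = \emptyset$. This completes the induction.

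I do not expect a genuine obstacle here; the whole argument is a matter of carefully unwinding Definitions \ref{def: powers_of_S}, \ref{def: closure}, and \ref{def: diff_set}. The only point requiring a little care is the very first step — verifying that $S^{i}(B)$ is a function of $Cl_{i-1}(\langle B,S \rangle)$ alone — since the indexing in Definition \ref{def: powers_of_S} references the union $\bigcup_{j=0}^{i-1} S^j(B)$ rather than $Cl_{i-1}$ explicitly; recognising that these are the same object is exactly what makes the ``stabilisation propagates'' mechanism run.
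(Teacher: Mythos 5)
Your proof is correct and follows essentially the same route as the paper's: induction on $i$, with the inductive step showing that $D_i(\langle B,S \rangle)=\emptyset$ forces $Cl_i(\langle B,S \rangle)=Cl_{i-1}(\langle B,S \rangle)$ and hence $S^{i+1}(B)=S(Cl_i(\langle B,S \rangle))\cap\mathbb{N}=S(Cl_{i-1}(\langle B,S \rangle))\cap\mathbb{N}=S^i(B)\subseteq Cl_i(\langle B,S \rangle)$. Your uniform treatment is in fact slightly cleaner: the paper splits the inductive step into the cases $S^k(B)=\emptyset$ and $S^k(B)\subseteq Cl_{k-1}(\langle B,S \rangle)$, and in the first case derives a contradiction by invoking the assumption that $\langle B,S\rangle$ is an $\mathbb{N}$-I.M.\ (a hypothesis not present in the lemma's statement), whereas your argument subsumes both cases without needing it.
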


\begin{proof}
The proof can be found in the appendix (Section \ref{appendix: proof_of_reduction_section_lemmas}).
\end{proof}

\begin{remark}
Another way to look at $n(\langle B,S \rangle)$ is: $n(\langle B,S \rangle) = \vert \{i \geq 1: D_i(\langle B,S \rangle) \neq \emptyset \} \vert+1$.

Let $U=\{i \geq 1: D_i(\langle B,S \rangle) \neq \emptyset \}$. If $U$ is an infinite set, as $U \subseteq \mathbb{N}$, it has the same cardinality as $\mathbb{N}$. So, in cases where the minimum does not exist in Definition \ref{def: no_of_steps_induction_model}, we set $n(\langle B,S \rangle) = \aleph_0+1 = \aleph_0$.

Also, in fact, $U=\mathbb{N}$ when $U$ is infinite. Let $n \in \mathbb{N}$. If $D_n(\langle B,S \rangle)$ is empty, then $U$ is finite, which gives us a contradiction. So, it is non-empty. This implies that $n \in U$. As $n$ is arbitrary, we have $\mathbb{N} \subseteq U$. But $U \subseteq \mathbb{N}$, which gives us $U=\mathbb{N}$.
\end{remark}

\begin{proposition}
\label{prop: closure_with_number_of_steps}
Let $\langle B,S \rangle$ be an Induction Model. Then $Cl_{n(\langle B,S \rangle)-1}(\langle B,S \rangle) = Cl(\langle B,S \rangle)$.
\end{proposition}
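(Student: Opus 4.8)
The plan is to fix the abbreviation $n := n(\langle B,S \rangle)$ and prove the two inclusions separately. One direction is immediate: since $Cl_m(\langle B,S \rangle) = \bigcup_{i=0}^{m} S^i(B)$ is by definition a nested (increasing) family in $m$, we have $Cl_{n-1}(\langle B,S \rangle) \subseteq \bigcup_{i=0}^{\infty} S^i(B) = Cl(\langle B,S \rangle)$. So the whole content of the proposition is the reverse inclusion $Cl(\langle B,S \rangle) \subseteq Cl_{n-1}(\langle B,S \rangle)$, i.e.\ the statement that no genuinely new elements appear after the $(n-1)$-st stage.

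Before that I would dispose of the degenerate case where the minimum in Definition \ref{def: no_of_steps_induction_model} does not exist, i.e.\ $n = \aleph_0$. Here $n-1 = \aleph_0$ under the cardinal-arithmetic convention adopted in the Remark following Definition \ref{def: no_of_steps_induction_model} (namely $\aleph_0 - 1 = \aleph_0$), so $Cl_{n-1}(\langle B,S \rangle) = Cl_{\infty}(\langle B,S \rangle) = Cl(\langle B,S \rangle)$ and the statement holds trivially. Thus I may assume $n$ is finite.

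For finite $n$, the engine is Lemma \ref{lemma: no_new_elements_generated}, which gives $D_i(\langle B,S \rangle) = \emptyset$ for every $i \geq n$. Unwinding Definition \ref{def: diff_set}, the equality $D_i(\langle B,S \rangle) = S^i(B) \setminus Cl_{i-1}(\langle B,S \rangle) = \emptyset$ means exactly $S^i(B) \subseteq Cl_{i-1}(\langle B,S \rangle)$ for all $i \geq n$. I would then show by induction on $m$ that $Cl_m(\langle B,S \rangle) = Cl_{n-1}(\langle B,S \rangle)$ for every $m \geq n-1$. The base case $m = n-1$ is immediate. For the step, assume $Cl_m(\langle B,S \rangle) = Cl_{n-1}(\langle B,S \rangle)$ with $m \geq n-1$; then $m+1 \geq n$, so $S^{m+1}(B) \subseteq Cl_m(\langle B,S \rangle)$ by the lemma, whence $Cl_{m+1}(\langle B,S \rangle) = Cl_m(\langle B,S \rangle) \cup S^{m+1}(B) = Cl_m(\langle B,S \rangle) = Cl_{n-1}(\langle B,S \rangle)$.

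Finally I would pass to the union: because the family $\{Cl_m(\langle B,S \rangle)\}_m$ is increasing with $Cl(\langle B,S \rangle) = \bigcup_{m=0}^{\infty} Cl_m(\langle B,S \rangle)$, while $Cl_m(\langle B,S \rangle) = Cl_{n-1}(\langle B,S \rangle)$ for all $m \geq n-1$ and $Cl_m(\langle B,S \rangle) \subseteq Cl_{n-1}(\langle B,S \rangle)$ for $m \leq n-1$, the union collapses to $Cl_{n-1}(\langle B,S \rangle)$, giving $Cl(\langle B,S \rangle) = Cl_{n-1}(\langle B,S \rangle)$. The only places demanding care are bookkeeping ones: confirming that the inductive step always invokes the lemma at an index $i = m+1 \geq n$ so that $D_i$ is guaranteed empty, and handling the $n = \aleph_0$ case through the convention $\aleph_0 - 1 = \aleph_0$ rather than treating $n$ as an ordinary natural number. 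Neither is a genuine difficulty, so I expect the proposition to follow essentially directly from Lemma \ref{lemma: no_new_elements_generated}.
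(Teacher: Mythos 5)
Your proof is correct and follows essentially the same route as the paper's: both reduce the claim to Lemma \ref{lemma: no_new_elements_generated} (that $D_i(\langle B,S \rangle)=\emptyset$ for all $i \geq n(\langle B,S \rangle)$) to conclude $S^i(B) \subseteq Cl_{n(\langle B,S \rangle)-1}(\langle B,S \rangle)$ for every $i$, so the union defining $Cl(\langle B,S \rangle)$ collapses. Your write-up is merely more explicit about the trivial inclusion and the $n = \aleph_0$ convention, both of which the paper leaves implicit.
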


\begin{proof}
By definition, $\bigcup\limits_{i=0}^{\infty} S^i(B) = Cl(\langle B,S \rangle)$. For $i \geq n(\langle B,S \rangle)$, we have
$S^i(B) \setminus Cl_{n(\langle B,S \rangle)-1}(\langle B,S \rangle)
= \bigcup\limits_{j=n(\langle B,S \rangle)}^{i} S^j(B) \setminus Cl_{j-1}(\langle B,S \rangle)$, which is an empty set. So, $Cl_{n(\langle B,S \rangle)-1}=Cl(\langle B,S \rangle)$.
\end{proof}

\begin{lemma}
\label{lemma: S_inj_takes_same_no_of_steps_as_S}
Let $\langle B,S \rangle$ be an Induction Model. Then, $n(\langle B,S \rangle)=n(\langle B,S_{inj} \rangle)$.
\end{lemma}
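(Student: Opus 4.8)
The plan is to show that the two induction models have \emph{exactly the same} difference sets at every level $m \geq 1$, from which equality of the step counts is immediate. The engine for this is Lemma~\ref{lemma: smallest_power_of_x_gives_S_inj_power}, which already packages the substantive work; what remains is essentially bookkeeping about the base case.

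First I would unwind the definitions of $D_m$ and $l$ to observe that, for every $m \geq 1$,
\begin{equation*}
  D_m(\langle B,S \rangle) = \{x \in Cl(\langle B,S \rangle) : l(x,\langle B,S \rangle) = m\}.
\end{equation*}
Indeed, $x \in D_m(\langle B,S \rangle) = S^m(B) \setminus Cl_{m-1}(\langle B,S \rangle)$ means $x \in S^m(B)$ (so $l(x,\langle B,S \rangle) \leq m$) and $x \notin S^j(B)$ for all $j \leq m-1$ (so $l(x,\langle B,S \rangle) \geq m$), which is exactly $l(x,\langle B,S \rangle)=m$; the converse is the same computation read backwards. Note also that $l(x,\langle B,S \rangle) = 0$ iff $x \in S^0(B) = B$, so every $x$ appearing above (having $l = m \geq 1$) automatically lies in $Cl(\langle B,S \rangle) \setminus B$.

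Next I would invoke Lemma~\ref{lemma: smallest_power_of_x_gives_S_inj_power}, which states that for $x \in Cl(\langle B,S \rangle) \setminus B$ one has $l(x,\langle B,S \rangle) = m$ iff $x \in D_m(\langle B,S_{inj} \rangle)$; equivalently,
\begin{equation*}
  D_m(\langle B,S_{inj} \rangle) = \{x \in Cl(\langle B,S \rangle) \setminus B : l(x,\langle B,S \rangle) = m\}.
\end{equation*}
Combining this with the previous display, and using that the level-$m$ elements (for $m \geq 1$) avoid $B$ while the two models share a closure by Proposition~\ref{prop: closure_injective_version}, I would conclude $D_m(\langle B,S \rangle) = D_m(\langle B,S_{inj} \rangle)$ for all $m \geq 1$. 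In particular $D_m(\langle B,S \rangle) = \emptyset$ iff $D_m(\langle B,S_{inj} \rangle) = \emptyset$.

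Finally, since the two families of difference sets agree at every level $m \geq 1$, the index sets $\{i \geq 1 : D_i(\langle B,S \rangle) = \emptyset\}$ and $\{i \geq 1 : D_i(\langle B,S_{inj} \rangle) = \emptyset\}$ coincide; hence their minima agree, and if neither minimum exists both counts equal $\aleph_0$ by the convention following Definition~\ref{def: no_of_steps_induction_model}. This yields $n(\langle B,S \rangle) = n(\langle B,S_{inj} \rangle)$. I do not anticipate a serious obstacle, as all the difficulty is absorbed into Lemma~\ref{lemma: smallest_power_of_x_gives_S_inj_power}; the one point requiring care is that the level function $l$ is \emph{always} taken with respect to $S$ (never $S_{inj}$), so that the identification of the level-$m$ sets is precisely the content of that lemma.
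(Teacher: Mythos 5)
Your proof is correct, but it takes a genuinely different route from the paper's. The paper argues by contradiction in two separate directions: to rule out $n(\langle B,S_{inj} \rangle) < n(\langle B,S \rangle)$ it combines Proposition~\ref{prop: closure_with_number_of_steps} with the inclusion ${S^i}_{inj}(B) \subseteq S^i(B)$ to show the closure of $\langle B,S \rangle$ would stabilize too early; to rule out $n(\langle B,S_{inj} \rangle) > n(\langle B,S \rangle)$ it computes, using Lemma~\ref{lemma: powers_of_injective_version_of_S}, that $D_{n(\langle B,S \rangle)}(\langle B,S_{inj} \rangle) = D_{n(\langle B,S \rangle)}(\langle B,S \rangle)$. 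You instead prove the single stronger fact that $D_m(\langle B,S \rangle) = D_m(\langle B,S_{inj} \rangle)$ for \emph{every} $m \geq 1$, by identifying both sets with the level set $\{x : l(x,\langle B,S \rangle) = m\}$ via Lemma~\ref{lemma: smallest_power_of_x_gives_S_inj_power}, and then reading off the equality of the two minima (including the $\aleph_0$ convention). Your identification of $D_m(\langle B,S \rangle)$ with the level set is a correct unwinding of Definitions~\ref{def: smallest_power_of_S_x} and~\ref{def: diff_set}, and the needed containment $D_m(\langle B,S_{inj} \rangle) \subseteq Cl(\langle B,S \rangle) \setminus B$ does hold (closures agree by Proposition~\ref{prop: closure_injective_version}, and $D_m$ avoids $B = S_{inj}^0(B) \subseteq Cl_{m-1}$ for $m \geq 1$). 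What your route buys is economy and a stronger intermediate statement: the difference sets agree levelwise, which immediately gives the result without case analysis. What the paper's route buys is independence from Lemma~\ref{lemma: smallest_power_of_x_gives_S_inj_power} --- it leans only on the algebraic identity ${S^i}_{inj}(B) = S^i(B) \setminus B$; indeed, the set computation in the second half of the paper's proof is valid for arbitrary $m$ and would also yield your levelwise identity, so the two arguments are close cousins differing mainly in which auxiliary lemma carries the weight.
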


\begin{proof}
Suppose $n(\langle B,S_{inj} \rangle) < n(\langle B,S \rangle)$. From Lemma \ref{prop: closure_with_number_of_steps}, we have $Cl_{n(\langle B,S \rangle)-1}(\langle B,S \rangle) = Cl(\langle B,S \rangle)$. This implies
\begin{align*}
  Cl(\langle B,S \rangle) & = Cl(\langle B,S_{inj} \rangle) = Cl_{n(\langle B,S_{inj} \rangle)-1}(B,S_{inj})  \\
  & \subseteq Cl_{n(\langle B,S_{inj} \rangle)-1}(\langle B,S \rangle) & \text{ (as ${S^i}_{inj}(B)=S^i(B) \setminus B \subseteq S^i(B)$)} \\
  & \subseteq Cl(\langle B,S \rangle) &
\end{align*}
This gives us $Cl_{n(\langle B,S_{inj} \rangle)-1}(\langle B,S \rangle)=Cl(\langle B,S \rangle)$. So, $D_i(\langle B,S \rangle)=\emptyset$ for $i=n(\langle B,S_{inj} \rangle) < n(\langle B,S \rangle)$, which is a contradiction. So, we have $n(\langle B,S_{inj} \rangle) \geq n(\langle B,S \rangle)$.

If $n(\langle B,S_{inj} \rangle) > n(\langle B,S \rangle)$, then $D_{n(\langle B,S \rangle)}(\langle B,S_{inj} \rangle) \neq \emptyset$.
\begin{align*}
  &D_{n(\langle B,S \rangle)}(\langle B,S_{inj} \rangle)
   = \left[S^{n(\langle B,S \rangle)}(B) \setminus B \right] \setminus \left[B \cup \bigcup\limits_{i=1}^{n(\langle B,S \rangle)-1} (S^i(B) \setminus B) \right] \\
  & = \left[S^{n(\langle B,S \rangle)}(B) \setminus B \right] \setminus \bigcup\limits_{i=0}^{n(\langle B,S \rangle)-1} S^i(B) 
   = S^{n(\langle B,S \rangle)}(B) \setminus \bigcup\limits_{i=0}^{n(\langle B,S \rangle)-1} S^i(B)  = D_{n(\langle B,S \rangle)}(\langle B,S \rangle)
\end{align*}
So, $D_{n(\langle B,S \rangle)}(\langle B,S \rangle) \neq \emptyset$, which is a contradiction. Therefore, $n(\langle B,S_{inj} \rangle)=n(\langle B,S \rangle)$.
\end{proof}

\begin{proposition}
  For $n \neq m$, $D_n(\langle B,S \rangle) \cap D_m(\langle B,S \rangle) = \emptyset$.
\end{proposition}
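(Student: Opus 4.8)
The plan is to argue directly from the definitions of $Cl_n$ and $D_n$, exploiting the obvious monotonicity of the closure sequence. Without loss of generality I would assume $n < m$ (the case $m < n$ being symmetric, and $n = m$ being excluded by hypothesis), and show that every element of $D_n(\langle B,S \rangle)$ already lies in $Cl_{m-1}(\langle B,S \rangle)$, hence cannot belong to $D_m(\langle B,S \rangle)$.

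The key steps, in order, are as follows. First I would record that for any $j$, the single power $S^j(B)$ is contained in the closure up to that level, i.e. $S^j(B) \subseteq Cl_j(\langle B,S \rangle)$, which is immediate from $Cl_j(\langle B,S \rangle) = \bigcup_{i=0}^{j} S^i(B)$. Second, I would note the monotonicity $Cl_a(\langle B,S \rangle) \subseteq Cl_b(\langle B,S \rangle)$ whenever $a \leq b$, again directly from the union definition. Now take any $x \in D_n(\langle B,S \rangle)$. By Definition~\ref{def: diff_set}, $x \in S^n(B)$, so by the first step $x \in Cl_n(\langle B,S \rangle)$. Since $n < m$ gives $n \leq m-1$, monotonicity yields $x \in Cl_{m-1}(\langle B,S \rangle)$.

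Finally I would invoke the definition of $D_m(\langle B,S \rangle) = S^m(B) \setminus Cl_{m-1}(\langle B,S \rangle)$: every element of $D_m(\langle B,S \rangle)$ is by construction \emph{excluded} from $Cl_{m-1}(\langle B,S \rangle)$. Since we have just shown $x \in Cl_{m-1}(\langle B,S \rangle)$, it follows that $x \notin D_m(\langle B,S \rangle)$. As $x$ was an arbitrary element of $D_n(\langle B,S \rangle)$, this gives $D_n(\langle B,S \rangle) \cap D_m(\langle B,S \rangle) = \emptyset$.

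I do not anticipate a genuine obstacle here: the statement is essentially a bookkeeping consequence of the fact that the difference sets strip out exactly the elements that are \emph{newly} generated at each level, so an element first appearing at level $n$ can never be ``new'' again at a strictly later level. The only thing to be careful about is the edge behaviour of the indices (for instance when $n = 0$, where $Cl_{-1}$ should be read as $\emptyset$ or the relevant power is $S^0(B) = B$), but this does not affect the inclusion argument above and can be handled by the same monotonicity observation.
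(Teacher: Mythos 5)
Your proof is correct and is essentially the argument the paper gives: the paper takes the larger index, say $n>m$, and observes that $D_n(\langle B,S \rangle)$ excludes all of $S^m(B) \supseteq D_m(\langle B,S \rangle)$, which is exactly your monotonicity observation viewed from the other side. No substantive difference.
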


\begin{proof}
  Say $n>m$. $D_n(\langle B,S \rangle)=S^n(B) \setminus \bigcup\limits_{i=0}^{n} S^i(B)$. So, $D_n(\langle B,S \rangle) \cap S^m(B) = \emptyset$, which implies $D_n(\langle B,S \rangle) \cap \left[S^m(B) \setminus \bigcup\limits_{i=0}^{m-1} S^i(B)\right]= \emptyset$. Therefore, $D_n(\langle B,S \rangle) \cap D_m(\langle B,S \rangle)=\emptyset$.
\end{proof}

We now prove the criteria for reduction and equivalence.

\criteria*

\begin{proof}
$(\Longrightarrow)$ Suppose $n(\langle B_1, S_1 \rangle) \leq n(\langle B_2, S_2 \rangle)$. Consider the following relation, $R$:
\begin{align*}
  B_2 & \rightarrow B_1 \\
  D_1(\langle B_2,S_{2_{inj}} \rangle) & \rightarrow Cl_1(\langle B_1,S_1 \rangle) \\
  D_2(\langle B_2,S_{2_{inj}} \rangle) & \rightarrow Cl_2(\langle B_1,S_1 \rangle) \\
  & \hspace{5pt} \vdots \\
  D_{n(\langle B_1,S_1 \rangle)}(\langle B_2,S_{2_{inj}} \rangle) & \rightarrow Cl_{n(\langle B_1,S_1 \rangle)}(\langle B_1,S_1 \rangle) \\
  & \hspace{5pt} \vdots \\
  D_{n(\langle B_2,S_2 \rangle)}(\langle B_2,S_{2_{inj}} \rangle) & \rightarrow Cl_{n(\langle B_2,S_2 \rangle)}(\langle B_2,S_2 \rangle)
\end{align*}
\begin{enumerate}
\item $\bigcup\limits_{x \in \mathbb{N}} R(x)  = \bigcup\limits_{i=0}^{n(\langle B_2,S_2 \rangle)} \bigcup\limits_{x \in D_i(\langle B,S \rangle)} R(x)  = \bigcup\limits_{i=0}^{n(\langle B_2,S_2 \rangle)} {S_1}^i(B_1) = \bigcup\limits_{i=0}^{n(\langle B_1,S_1 \rangle)} {S_1}^i(B_1) = Cl(\langle B_1,S_1 \rangle)$
\item $\bigcup\limits_{x \in B_2} R(x)=B_1$
\item Let $x \in \mathbb{N} \setminus B_2$. Say $x \in D(\langle B_2,S_{2_{inj}} \rangle,m)$. Then, from Lemma \ref{lemma: smallest_power_of_x_gives_S_inj_power}, $l(\langle B,S \rangle,x)=m$. Let $x=S_{2_{inj}}(n_1,n_2, \ldots, n_{k_2})$, where $n_i \in Cl_{m-1}(\langle B,S \rangle)$. But at least one of the $n_i$s lies in $D_{m-1}(\langle B,S \rangle)$, otherwise $l(\langle B,S \rangle,x)<m$ which is a contradiction. So,
  $S_1(\bigcup\limits_{i=1}^{k_2} R(n_i) \bigcup \left[ 
  \bigcup\limits_{i=1}^{k_2} R(n_i) \right]
   = S_1(Cl_{m-1}(\langle B_1,S_1 \rangle)) \cup \left[ Cl_{m-1}(\langle B_1,S_1 \rangle) \right] 
   = {S_1}^m(B) \cup Cl_{m-1}(\langle B_1,S_1 \rangle) = Cl_m(\langle B_1,S_1 \rangle)$
\end{enumerate}
$(\Longleftarrow)$ Suppose $\langle B_1,S_1 \rangle$ can be reduced to $\langle B_2,S_2 \rangle$. Let us assume that $n(\langle B_1, S_1 \rangle) > n(\langle B_2, S_2 \rangle)$ or $n(\langle B_2, S_2 \rangle) < n(\langle B_1, S_1 \rangle)$. As $\langle B_1,S_1 \rangle$ can be reduced to $\langle B_2,S_2 \rangle$, $\exists$ a relation $R$ satisfying the conditions in Definition \ref{def: reduction}. We have $R(B_2)=B_1$ (from second condition). It follows from the third condition that $R(S_{2_{inj}}(B_2)) \subseteq S_1(B_1) \cup B$. \\
Our claim is that $R(D_k(\langle B_2,S_{2_{inj}} \rangle) \subseteq Cl_k(\langle B_1,S_1 \rangle)$ for $k \geq 1$. We will use induction to show this. For $i=1$, the statement is true. Assume it is true for $i<k$. If $k < n(\langle B_2,S_2 \rangle)$:
\begin{align*}
  R(D_k(\langle B_2,S_{2_{inj}} \rangle)
  & = \bigcup\limits_{\substack{\mathbf{n} \in Cl_{k-1}(\langle B_2,S_{2_{inj}} \rangle) \\ S_{2_{inj}}(\mathbf{n}) \not\in Cl_{k-1}(\langle B_2,S_{2_{inj}} \rangle)}} S_1(\bigcup\limits_{i=1}^{k_2} R(\mathbf{n}_i)) \bigcup \left[ \bigcup\limits_{i=1}^{k_2} R(\mathbf{n}_i) \right] \\
  & \subseteq S_1(Cl_{k-1}(\langle B_1,S_1 \rangle)) \bigcup Cl_{k-1}(\langle B_1,S_1 \rangle) \\
  & = {S_1}^k(B_1) \bigcup Cl_{k-1}(\langle B_1,S_1 \rangle) \\
  & = Cl_{k}(\langle B_1,S_1 \rangle)
\end{align*}
If $k \geq n(\langle B_2,S_2 \rangle)$:
$R(D_k(\langle B_2,S_{2_{inj}} \rangle)) = \emptyset \subseteq Cl_k(\langle B_1,S_1 \rangle)$ as $D_k(\langle B_2,S_{2_{inj}} \rangle)=\emptyset$ or in other words, $\nexists$ $\mathbf{n} \in Cl_{k-1}(\langle B_2,S_{2_{inj}} \rangle)$ such that $S_{2_{inj}}(\mathbf{n}) \not\in Cl_{k-1}(\langle B_2,S_{2_{inj}} \rangle)$.
So we have,
\begin{align*}
  R(Cl(\langle B_2,S_2 \rangle))
  & = R(Cl_{n(\langle B_2,S_2 \rangle)-1}(\langle B_2,S_2 \rangle)) \\
  & = R \left(B_2 \bigcup \left[ \bigcup\limits_{i=1}^{n(\langle B_2,S_2 \rangle)-1} D_i(\langle B_2,S_2 \rangle) \right] \right)  \subseteq B_1 \bigcup \left[ \bigcup\limits_{i=1}^{n(\langle B_2,S_2 \rangle)-1} Cl_i(\langle B_1,S_1 \rangle) \right] \\
  & = Cl_{n(\langle B_2,S_2 \rangle)-1}(\langle B_1,S_1 \rangle)  \neq Cl(\langle B_1,S_1 \rangle) \quad\quad \text{ (as $n(\langle B_1,S_1 \rangle) > n(\langle B_2,S_2 \rangle))$}
\end{align*}
which is a contradiction. Therefore, $n(\langle B_1,S_1 \rangle) \leq n(\langle B_2,S_2 \rangle)$.

The criterion for equivalence follows immediately from the criterion for reduction.
\end{proof}

\begin{corollary}
\label{corollary: all_Induction_Forms_can_be_reduced_to_Basic_Induction_Form}
Let $\langle B,S \rangle$ be an I.M., where $S$ is a $k$-ary function. Then it can be reduced to $\langle B_0,S_0 \rangle$.
\end{corollary}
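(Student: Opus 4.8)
The plan is to bypass any direct construction of a reduction relation and instead invoke the characterization just established in Theorem \ref{theorem: equivalent_condition_reduction}. That theorem tells us that $\langle B,S \rangle$ can be reduced to $\langle B_0,S_0 \rangle$ if and only if $n(\langle B,S \rangle) \leq n(\langle B_0,S_0 \rangle)$. So the entire statement collapses to a comparison of step counts, and it suffices to show that the basic model $\langle B_0, S_0 \rangle$ realizes the maximal possible value of $n$, so that the inequality holds for every I.M. $\langle B,S \rangle$ automatically.

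First I would compute $n(\langle B_0,S_0 \rangle)$ explicitly. With $B_0=\{1\}$ and $S_0 : x \rightarrow x+1$, an easy induction gives $S_0^i(B_0)=\{2,3,\ldots,i+1\}$, so that $D_i(\langle B_0,S_0 \rangle)=\{i+1\}$ is non-empty for every $i \geq 1$. Hence the set $\{i \geq 1 : D_i(\langle B_0,S_0 \rangle)=\emptyset\}$ is empty and its minimum does not exist; by the convention adopted in the remark following Definition \ref{def: no_of_steps_induction_model}, this is exactly the case in which we set $n(\langle B_0,S_0 \rangle)=\aleph_0$.

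Next I would observe that $\aleph_0$ is an upper bound for $n(\langle B,S \rangle)$ over all I.M.s. By that same remark, $n(\langle B,S \rangle)=\vert U \vert + 1$ where $U=\{i \geq 1 : D_i(\langle B,S \rangle) \neq \emptyset\} \subseteq \mathbb{N}$; being a subset of $\mathbb{N}$, the set $U$ is either finite or of cardinality $\aleph_0$, and in either case $n(\langle B,S \rangle) \leq \aleph_0$. Combining the two computations yields $n(\langle B,S \rangle) \leq \aleph_0 = n(\langle B_0,S_0 \rangle)$, and Theorem \ref{theorem: equivalent_condition_reduction} then delivers the desired reduction.

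There is essentially no hard step once the characterization theorem is in hand: the corollary is a one-line consequence of the fact that the basic model sits at the top of the $n$-ordering. The only points requiring a little care are bookkeeping ones — reading ``the minimum does not exist'' correctly as $n=\aleph_0$ rather than as an undefined quantity, and confirming that Theorem \ref{theorem: equivalent_condition_reduction} is stated for arbitrary I.M.s rather than only for $\mathbb{N}$-I.M.s, so that no extra hypothesis such as $Cl(\langle B,S \rangle)=\mathbb{N}$ must be checked for the given $\langle B,S \rangle$.
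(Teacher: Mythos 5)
Your proposal is correct and follows essentially the same route as the paper: both invoke Theorem \ref{theorem: equivalent_condition_reduction} and reduce the claim to the observation that $n(\langle B_0,S_0 \rangle)=\aleph_0$ bounds $n(\langle B,S \rangle)$ for every I.M. You merely add the (correct) explicit verification that $D_i(\langle B_0,S_0 \rangle)=\{i+1\}\neq\emptyset$ for all $i\geq 1$, which the paper states without proof.
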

 
\begin{proof}
$n(\langle B_0,S_0 \rangle)=\aleph_0$. If $n(\langle B,S \rangle)$ is finite, we have $n(\langle B,S \rangle) \leq n(\langle B_0,S_0 \rangle)$. If $n(\langle B,S \rangle) = \aleph_0$, then also we have $n(\langle B,S \rangle) \leq n(\langle B_0,S_0 \rangle)$. It follows from Theorem \ref{theorem: equivalent_condition_reduction} that $\langle B,S \rangle$ can be reduced to $\langle B_0,S_0 \rangle$.
\end{proof}

In the next corollary, we show that reduction on Induction Models is a transitive property.
\begin{corollary}
Let $\langle B_1,S_1 \rangle$, $\langle B_2,S_2 \rangle$ and $\langle B_3,S_3 \rangle$ be I.M.s. If $\langle B_1,S_1 \rangle$ can be reduced to $\langle B_2,S_2 \rangle$ and $\langle B_2,S_2 \rangle$ can be reduced to $\langle B_3,S_3 \rangle$, then $\langle B_1,S_1 \rangle$ can be reduced to $\langle B_3,S_3 \rangle$. In other words, reduction on Induction Models is a transitive property.
\end{corollary}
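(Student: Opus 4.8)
The plan is to push all the content into the numerical characterization supplied by \autoref{theorem: equivalent_condition_reduction}, thereby avoiding any direct manipulation of the witnessing relations. The key observation is that, by that theorem, reducibility between two induction models is \emph{equivalent} to a single inequality between the associated step counts $n(\cdot)$, and the relation $\leq$ is itself transitive on the totally ordered set $\mathbb{N} \cup \{\aleph_0\}$ of possible values.

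First I would invoke \autoref{theorem: equivalent_condition_reduction} on the first hypothesis: since $\langle B_1,S_1 \rangle$ can be reduced to $\langle B_2,S_2 \rangle$, we obtain $n(\langle B_1,S_1 \rangle) \leq n(\langle B_2,S_2 \rangle)$. Applying the theorem again to the second hypothesis, that $\langle B_2,S_2 \rangle$ can be reduced to $\langle B_3,S_3 \rangle$, yields $n(\langle B_2,S_2 \rangle) \leq n(\langle B_3,S_3 \rangle)$. Chaining these two inequalities gives $n(\langle B_1,S_1 \rangle) \leq n(\langle B_3,S_3 \rangle)$, and the converse direction of \autoref{theorem: equivalent_condition_reduction} then immediately produces a reduction from $\langle B_1,S_1 \rangle$ to $\langle B_3,S_3 \rangle$, which is exactly the claim.

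As for obstacles, there is essentially none once the characterization theorem is in hand; the entire difficulty has already been absorbed into \autoref{theorem: equivalent_condition_reduction}. The only point meriting a word of care is that the quantities $n(\cdot)$ may take the value $\aleph_0$, so one should note that $\leq$ is well defined and transitive on $\mathbb{N} \cup \{\aleph_0\}$, with $m \leq \aleph_0$ for every finite $m$. I would deliberately avoid attempting a direct composition of the two witnessing relations, say $R_{12}$ and $R_{23}$: reconciling condition (3) of \autoref{def: reduction}, which ties each relation to the injective versions $S_{2_{inj}}$ and $S_{3_{inj}}$ respectively, across the two reductions would be considerably more delicate and is entirely unnecessary given the theorem.
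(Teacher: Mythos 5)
Your proposal is correct and matches the paper's own argument, which likewise derives transitivity directly from the numerical characterization in \autoref{theorem: equivalent_condition_reduction} by chaining the inequalities $n(\langle B_1,S_1 \rangle) \leq n(\langle B_2,S_2 \rangle) \leq n(\langle B_3,S_3 \rangle)$. Your added remark about $\leq$ being well defined on $\mathbb{N} \cup \{\aleph_0\}$ is a reasonable point of care that the paper leaves implicit.
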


\begin{proof}
Follows from Theorem \ref{theorem: equivalent_condition_reduction}.
\end{proof}
\section{Conclusion}
\label{sect:future-work}

In this paper,  we generalize the notion of Induction Models introduced by Henkin~\cite{HL1960}. We then characterize the existence of $B$ for a given $S$ and vice versa. Interestingly, we show that the existence of $S$ with additive structure depends on $|B|$. Finally, we introduce the notion of reduction and equivalence among Induction Models. 

Theorem~\ref{main_theorem: summary_finding_B_and_S} shows that for every non-empty $B$, there exists $S$ with additive structure that $\langle B, S \rangle$ is an Induction Model but we could show existence of $S$ with multiplicative structure only for $|B| \geq 2$. An open question would be to show existence of $S$ with multiplicative structure for $|B| = 1$. 
Mathematical induction is a widely employed tool in mathematics and therefore while we have focused on Induction Models over $\mathbb{N}$, an interesting extension would be to seek logical foundations of definition and the notions of reduction and equivalence for induction  over real numbers (\cite{CYR1919},\cite{FLR1957}), Induction over sets (\cite{DWL1957}), structural and transfinite induction (\cite{G14}).

\section*{Acknowledgments}
The authors owe their deepest gratitude to Parag Singla for hosting the first author at IIT Delhi. 
A. Dileep was partly supported through an IBM SUR award. This work was supported in part by National Research Foundation Singapore under its NRF Fellowship Programme [NRF-NRFFAI1-2019-0004], NUS ODPRT Grant [R-252-000-685-13], and Sung Kah Kay Assistant Professorship Endowment. 


\label{sect:bib}
\bibliographystyle{plain}
\bibliography{references}

\clearpage
\appendix
\section{Backward Induction and Prime Induction are \\$\mathbb{N}$-Induction Models}
\label{appendix: proof_backward_induction_prime_induction}

\begin{lemma}
\label{lemma: backward_induction_is_valid_form}
The backward induction i.e. $\langle A, S: x \rightarrow x-1 \rangle$, where $A$ is an infinite subset of $\mathbb{N}$, is an $\mathbb{N}$-Induction Model.
\end{lemma}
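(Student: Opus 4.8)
The plan is to invoke Lemma~\ref{lemma: mi_definitions_are_equivalent}, which reduces the task to verifying that $Cl(\langle A, S \rangle) = \bigcup_{i=0}^{\infty} S^i(A) = \mathbb{N}$. One inclusion is immediate: by Definition~\ref{def: powers_of_S} each $S^i(A)$ is intersected with $\mathbb{N}$, so $Cl(\langle A, S \rangle) \subseteq \mathbb{N}$. The substance of the argument is the reverse inclusion $\mathbb{N} \subseteq Cl(\langle A, S \rangle)$.

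The key observation is that an infinite subset of $\mathbb{N}$ is unbounded above. Hence, given an arbitrary $n \in \mathbb{N}$, I can choose some $a \in A$ with $a \geq n$. I then claim that $a - k \in S^k(A)$ for every $k$ with $0 \leq k \leq a - 1$, which I would prove by induction on $k$: the base case $k = 0$ is $a \in S^0(A) = A$, and for the inductive step, if $a - k \in S^k(A) \subseteq \bigcup_{j=0}^{k} S^j(A)$ and $a - (k+1) \geq 1$, then $S(a-k) = a - (k+1) \in \mathbb{N}$, so $a - (k+1) \in S^{k+1}(A)$ by Definition~\ref{def: powers_of_S}. Taking $k = a - n$ gives $n = a - (a-n) \in S^{a-n}(A) \subseteq Cl(\langle A, S \rangle)$.

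Since $n$ was arbitrary, $\mathbb{N} \subseteq Cl(\langle A, S \rangle)$, and combined with the reverse inclusion this yields $Cl(\langle A, S \rangle) = \mathbb{N}$; by Lemma~\ref{lemma: mi_definitions_are_equivalent}, $\langle A, S \rangle$ is an $\mathbb{N}$-I.M. The only place that genuinely uses the hypothesis---and thus the main point to get right---is the unboundedness of $A$: were $A$ finite, no element above $\max(A)$ could ever be produced by a strictly decreasing $S$, so the closure would miss infinitely many naturals (in the same spirit as the failure exhibited in Example~\ref{example: not_N_induction_model}). Beyond stating the induction cleanly, I expect no serious obstacle; one need only note that the intersection with $\mathbb{N}$ in Definition~\ref{def: powers_of_S} merely discards $0$, which never interferes with reaching any target $n \geq 1$.
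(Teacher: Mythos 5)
Your proof is correct. It does, however, take a slightly different formal route from the paper's own argument: the paper verifies Definition~\ref{def: N_induction_model} directly by contradiction --- assuming a set $G$ satisfying the two closure conditions with $G \neq \mathbb{N}$, picking $m \notin G$, choosing the smallest $m' \in A$ with $m' > m$ (which exists by infinitude of $A$), and applying $S$ exactly $m'-m$ times to force $m \in G$ --- whereas you go through Lemma~\ref{lemma: mi_definitions_are_equivalent} and establish $Cl(\langle A,S\rangle) = \mathbb{N}$ constructively, with an explicit induction showing $a-k \in S^k(A)$. The mathematical core is identical in both (unboundedness of $A$ plus a descending chain of applications of $S$), but your version has the mild advantages of avoiding proof by contradiction, of pinning down exactly which power $S^{a-n}(A)$ contains $n$, and of being stated in the closure language that the rest of Section~\ref{sect: characterisation} actually uses when it cites this lemma; the paper's version has the advantage of not depending on Lemma~\ref{lemma: mi_definitions_are_equivalent}. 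Your side remarks --- that the intersection with $\mathbb{N}$ only ever discards $0$, and that finiteness of $A$ would break the argument --- are accurate and correctly identify where the hypothesis is used.
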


\begin{proof}
Suppose there exists a $G \subseteq \mathbb{N}$ such that: \\
1) $A \subseteq G$, and \\
2) if $x \in G$ and $S(x) \in \mathbb{N}$, then $S(x) \in G$, \\
but $G \neq \mathbb{N}$.

Say $m \not\in G$. Pick the smallest element greater than $m$ in $A$, say $m^\prime$. Such an element exists as $A$ is an infinite subset of $\mathbb{N}$. $m^\prime \in G$ due to 1). Apply $S$ on $m^\prime$, $m^\prime-m$ times, to get $m$. Then, $m \in G$ due to 2), which is a contradiction. So, no such $G$ exists, which implies, $G=\mathbb{N}$.
\end{proof}

\begin{lemma}
\label{lemma: prime_induction_is_valid_form}
Let $B=\{p: p$ is a prime$\} \cup \{1\}$ and $S: (x,y) \rightarrow xy$. Then, $\langle B,S \rangle$ is an $\mathbb{N}$-Induction Model.
\end{lemma}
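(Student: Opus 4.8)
The plan is to invoke Lemma \ref{lemma: mi_definitions_are_equivalent}, which lets me replace the universal condition of Definition \ref{def: N_induction_model} by the single equation $Cl(\langle B,S \rangle)=\bigcup_{i=0}^{\infty} S^i(B)=\mathbb{N}$. One inclusion is immediate: since $S(x,y)=xy\in\mathbb{N}$ whenever $x,y\in\mathbb{N}$, every $S^i(B)\subseteq\mathbb{N}$, so $Cl(\langle B,S\rangle)\subseteq\mathbb{N}$. All the content lies in the reverse inclusion $\mathbb{N}\subseteq Cl(\langle B,S\rangle)$.

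For the reverse inclusion the key ingredient is the fundamental theorem of arithmetic: every $n\geq 2$ factors as a product of primes $n=p_1p_2\cdots p_r$ (with repetition allowed), and each $p_i\in B$. I would establish $n\in Cl(\langle B,S\rangle)$ by induction on $r$, the number of prime factors of $n$ counted with multiplicity. The base cases are handled directly by $B$ itself: $n=1\in B=S^0(B)$, and any prime $n=p\in B=S^0(B)$, so every natural number with zero or one prime factor already lies in $Cl_0(\langle B,S\rangle)$.

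For the inductive step I would strengthen the hypothesis to track a specific closure level: assume every natural number with at most $r$ prime factors lies in some \emph{finite} closure $Cl_j(\langle B,S\rangle)$. Given $n=p_1\cdots p_{r+1}$, set $m=p_1\cdots p_r$. By the inductive hypothesis $m\in Cl_j(\langle B,S\rangle)$ for some $j$, and $p_{r+1}\in B\subseteq Cl_j(\langle B,S\rangle)$. Since both $m$ and $p_{r+1}$ lie in $\bigcup_{i=0}^{j}S^i(B)$, Definition \ref{def: powers_of_S} yields $n=S(m,p_{r+1})=m\,p_{r+1}\in S^{j+1}(B)\subseteq Cl(\langle B,S\rangle)$. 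This closes the induction and gives $\mathbb{N}\subseteq Cl(\langle B,S\rangle)$, hence equality, so $\langle B,S\rangle$ is an $\mathbb{N}$-I.M. by Lemma \ref{lemma: mi_definitions_are_equivalent}.

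There is no deep obstacle here; the only subtlety, which I expect to be the main point requiring care, is the bookkeeping forced by Definition \ref{def: powers_of_S}, where $S^{i}(B)$ draws its arguments from the entire accumulated set $\bigcup_{j=0}^{i-1}S^{j}(B)$ rather than from $S^{i-1}(B)$ alone. To apply $S$ to the partial product $m$ together with the prime $p_{r+1}$ at the next level, I must know that $m$ sits at some concrete finite level, not merely in the infinite closure. Strengthening the induction to carry the level index $j$ (as above) resolves this cleanly and keeps the argument finite at each step.
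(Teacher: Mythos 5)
Your proof is correct and follows essentially the same route as the paper: both reduce to showing $\mathbb{N}\subseteq Cl(\langle B,S\rangle)$ via the fundamental theorem of arithmetic and place each $n$ in some finite power $S^i(B)$. The only difference is organizational --- you induct on $\Omega(n)$, the number of prime factors counted with multiplicity, whereas the paper writes down an explicit level ($r+l-2$ in terms of the maximal exponent $r$ and the number $l$ of distinct primes); your careful handling of the accumulated-union subtlety in Definition \ref{def: powers_of_S} is if anything more explicit than the paper's.
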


\begin{proof}
Let $n \in \mathbb{N}$ be a composite number. Then, $n=p_1^{r_1}p_2^{r_2} \ldots p_l^{r_l}$, where $p_i$s are primes and $r_i \geq 1$. Let $r=max\{r_1,r_2, \ldots, r_l\}$. Then, $p_i^{r_i} \in S^{r-1}(B)$ for $1 \leq i \leq l$. So, $p_1^{r_1}p_2^{r_2} \ldots p_l^{r_l} \in S^{r-1+l-1}=S^{r+l-2}$. Note that $r+l-2 \geq 0$ as $r \geq r_i \geq 1$ for all $1 \leq i \leq l$ and $l \geq 1$. This implies, every $n \in \mathbb{N}$ lies in some $S^i(B)$. So, $\mathbb{N} \subseteq \bigcup\limits_{i=0}^{\infty} S^i(B)$.
\end{proof}

\section{Generating Functions with Additive and Multiplicative Structures are Non-self Loop Functions}
\label{appendix: additive_multiplicative_S_are_non_self_loop}

\begin{lemma}
\label{lemma: additive_S_cannot_be_self-loop}
If $S: \mathbb{N}^k \rightarrow \mathbb{Z}$ has an additive structure, then $S$ is a non-self-loop function.
\end{lemma}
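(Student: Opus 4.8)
The plan is to argue by contradiction: assume that $S(x_1,\ldots,x_k)=a_0+a_1x_1+\cdots+a_kx_k$ with every $a_i\neq 0$ for $1\le i\le k$ is simultaneously a self-loop function, and then force a numerical contradiction by evaluating $S$ at a short sequence of carefully chosen tuples. The guiding principle is that self-loopness is extremely rigid: at any input the output must equal one of the coordinates, so it is trapped in the window $[\min_i x_i,\max_i x_i]$. A linear form, by contrast, can be pushed outside that window as soon as two coordinates are raised, and this gap is what I would exploit.

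First I would pin down the coefficients. Evaluating at the all-ones tuple $(1,1,\ldots,1)$ gives $S=a_0+\sum_{i=1}^k a_i$, and since the only available coordinate value is $1$, self-loopness forces $a_0+\sum_{i=1}^k a_i=1$. Next, for each fixed index $j$ I would raise only the $j$-th coordinate to $2$ while keeping the rest at $1$; then $S=\bigl(a_0+\sum_i a_i\bigr)+a_j=1+a_j$, which must lie in $\{1,2\}$. Hence $a_j\in\{0,1\}$, and since $a_j\neq 0$ we get $a_j=1$. Applying this for every $j$ yields $a_1=\cdots=a_k=1$, whence $a_0=1-k$ from the all-ones relation, so $S(x_1,\ldots,x_k)=(1-k)+\sum_{i=1}^k x_i$.

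The contradiction is then produced, for $k\ge 2$, by bumping two coordinates simultaneously: taking $x_1=x_2=2$ and $x_3=\cdots=x_k=1$ gives $S=(1-k)+(k+2)=3$, while the coordinate set is $\{1,2\}$. Since $3\notin\{1,2\}$, the value $S$ does not coincide with any coordinate, so $S$ is not a self-loop function, contradicting the assumption. This completes the argument in the multi-ary regime.

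The main obstacle is the unary boundary case $k=1$. There the sole tuple has a single coordinate, one cannot raise two coordinates at once, and the coefficient analysis collapses to $a_1=1$, $a_0=0$, i.e. the identity $S(x)=x$, which genuinely is a self-loop of additive form. So the contradiction step above succeeds precisely for $k\ge 2$, and I would either read the additive-structure hypothesis as nontrivial for $k=1$ or treat the identity as the unique degenerate unary exception, using the monotonicity of $x\mapsto a_0+a_1x$ to confirm that it is the only unary additive self-loop. I would flag this case explicitly rather than let it silently weaken the statement.
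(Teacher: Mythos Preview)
Your argument for $k\ge 2$ is correct and shares its opening moves with the paper's proof: both evaluate at the all-ones tuple to force $a_0+\sum_{i=1}^k a_i=1$, then raise a single coordinate to $2$. From there the two diverge slightly. The paper argues that some index $l$ must satisfy $a_l\notin\{0,1\}$ and finishes directly with the single bump $S=1+a_l\notin\{1,2\}$; you instead push the single-bump information all the way to $a_j=1$ for every $j\ge 1$, hence $a_0=1-k$, and then close with the two-coordinate bump giving $S=3\notin\{1,2\}$. Your extra step in fact covers a configuration the paper's write-up handles awkwardly: when every $a_j=1$ for $j\ge 1$, the only coefficient outside $\{0,1\}$ is $a_0$ itself, and there is no coordinate $x_0$ to raise, so the paper's chosen $l$ lands at $l=0$ where ``take $x_l=2$'' has no meaning.

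You are also right to flag $k=1$. The identity $S(x)=x$ has $a_0=0$, $a_1=1\neq 0$, so it satisfies the additive-structure definition verbatim, and it is a self-loop function. This is a genuine counterexample to the lemma as literally stated; the paper's case split (in the subcase $a_0=0$, it asserts some $l\ge 1$ with $a_l\neq 1$ exists, since otherwise $\sum a_i=k$) silently fails exactly when $k=1$, where $\sum a_i=k=1$ is not a contradiction. So your reading is the correct one: the argument is clean for $k\ge 2$, and for $k=1$ the identity must be excluded as the unique degenerate exception.
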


\begin{proof}
Let $S(x_1,x_2, \ldots x_k)=a_0+a_1x_1+\ldots+a_kx_k$. 

If $\sum\limits_{i=0}^{k} a_i \neq 1$ , we can take $x_1=x_2=\ldots=x_k=1$, to get, $S(1,1, \ldots, 1)=\sum\limits_{i=0}^{k} a_i \not\in \{1,1, \ldots, 1\}$. 

If $\sum\limits_{i=0}^{k} a_i=1$, there are two cases. Either $a_0=0$ or $a_0 \neq 0$. 

If $a_0=0$, for some $l \geq 1$, $a_l \neq 1$ (otherwise $\sum\limits_{i=0}^{k} a_i=k$). As $l \geq 1$ and $S$ is additive, $a_l \neq 0$ (see Definition \ref{def: additive_S}).

If $a_0 \neq 0$, for some $l \geq 0$, $a_l \neq 1$ (otherwise $\sum\limits_{i=0}^{k} a_i=k+1$). $a_l \neq 0$ as $a_0 \neq 0$ and $a_i \neq 0$ for $i \geq 1$. So, we have $0 \leq l \leq k$ such that $a_l \not\in \{0,1\}$.

Take $x_l=2$ and $x_i=1$ for other $i$, to get,
\begin{align*}
S(1,\ldots,2,\ldots,1) & = a_0+a_1+\ldots+2a_l+\ldots+a_{k} \\
                       & = \sum\limits_{i=0}^{k} a_i+a_l \\
                       & = 1+a_l
\end{align*}
As $a_l \not\in \{0,1\}$, $1+a_l \not\in \{1,2\}$. So, $S(1,1,\ldots,2,\ldots,1) \not\in \{1,2\}$. Hence, $S$ is not a self-loop function.

\end{proof}

\begin{lemma}
\label{lemma: multiplicative_S_cannot_be_self-loop}
If $S: \mathbb{N}^k \rightarrow \mathbb{Z}$ has a multiplicative structure, then $S$ is a non-self-loop function.
\end{lemma}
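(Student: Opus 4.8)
The plan is to argue by contradiction and reduce this multivariate statement to a one-variable polynomial identity. First I would assume that $S$, written in its multiplicative form $S(x_1,\dots,x_k)=\sum_{\mathbf{i}\in 2^{[1,k]}} a_{\mathbf{i}}\prod_{j\in\mathbf{i}} x_j$ with leading coefficient $a_{[1,k]}\neq 0$, is a self-loop function, and then extract a contradiction by evaluating $S$ along the diagonal of $\mathbb{N}^k$ rather than fighting with general tuples.

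Concretely, the key steps would be: (1) restrict to the diagonal by setting $x_1=x_2=\cdots=x_k=t$ and define the single-variable function $g(t)=S(t,t,\dots,t)$; (2) observe that if $S$ is a self-loop function then $S(t,\dots,t)\in\{t,t,\dots,t\}=\{t\}$, so necessarily $g(t)=t$ for every $t\in\mathbb{N}$; (3) compute $g$ explicitly as the polynomial $g(t)=\sum_{\mathbf{i}\in 2^{[1,k]}} a_{\mathbf{i}}\,t^{|\mathbf{i}|}$ and note that the only subset of $[1,k]$ of size $k$ is $[1,k]$ itself, so the coefficient of $t^k$ is exactly $a_{[1,k]}\neq 0$ and hence $\deg g=k$; (4) conclude that the two polynomials $g(t)$ and $t$ agree on the infinite set $\mathbb{N}$ and are therefore identical, which is impossible as soon as $k\geq 2$ since then $\deg g=k\neq 1$. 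This contradiction shows that $S$ cannot be a self-loop function.

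The main obstacle — and really the only delicate point — is the low-dimensional boundary case, since the degree comparison needs $k\geq 2$, which is precisely the regime in which a multiplicative structure involves a genuine product of distinct variables. For $k=1$ the form collapses to $S(x)=a_0+a_1x$ with $a_1\neq 0$, which is simultaneously an additive structure; here the diagonal computation only yields $a_0+a_1t=t$, forcing $a_0=0$ and $a_1=1$, i.e. the identity, so the $k=1$ case must either be folded into Lemma~\ref{lemma: additive_S_cannot_be_self-loop} or set aside as the degenerate identity. An alternative would be to imitate the additive proof with explicit test tuples (all ones, then all ones with a single entry raised), but that forces a case split on the signs of the various coefficients $a_{\mathbf{i}}$, whereas the polynomial-identity argument disposes of every sign pattern uniformly; for that reason I would prefer the diagonal route.
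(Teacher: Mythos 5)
Your proof is correct for $k\geq 2$ and is a genuinely cleaner route than the paper's. Both arguments evaluate $S$ on the diagonal, but the paper splits into three cases according to the value of $a_\emptyset$: for $a_\emptyset\notin\{0,1\}$ it picks a prime $p\nmid a_\emptyset$ and argues $S(p,\dots,p)\equiv a_\emptyset\not\equiv 0\pmod p$; for $a_\emptyset=1$ it uses parity at $t=2$; and only in the case $a_\emptyset=0$ does it invoke root-counting, applied to the degree-$(k-1)$ polynomial $g(x)-1$ where $g(x)=\sum_{|\mathbf{i}|\geq 1}a_\mathbf{i}x^{|\mathbf{i}|-1}$. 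Your single polynomial-identity argument on $g(t)-t$, whose leading term $a_{[1,k]}t^{k}$ is nonzero by the definition of multiplicative structure, subsumes all three cases at once and avoids the modular-arithmetic bookkeeping; what the paper's version buys in exchange is an explicit witness tuple in the first two cases rather than a nonconstructive ``some $t$ exists.'' You are also right to flag $k=1$ as the delicate boundary: there the form collapses to $a_0+a_1x$ and the identity $x\mapsto x$ is literally of multiplicative (and additive) form while being a self-loop function, so the statement needs $k\geq 2$ (or an implicit exclusion of the identity). The paper's own proof silently has the same gap --- in its $a_\emptyset=0$ case the polynomial $f=g-1$ is the zero polynomial when $k=1$ and $a_{\{1\}}=1$, so the ``too many roots'' contradiction evaporates --- so your explicit acknowledgment of the degenerate case is a point in your favor rather than a defect.
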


\begin{proof}
Let $S(x_1,x_2, \ldots, x_k) = \sum\limits_{\mathbf{i} \in 2^{[1,k]}} a_\mathbf{i} \cdot (\prod\limits_{j \in \mathbf{i}} x_j)$. There are 3 cases: $a_\emptyset=0$, $a_\emptyset=1$ and $a_\emptyset \not \in \{0,1\}$. Let us define
\begin{equation*}
  g(x):=\sum\limits_{\substack{\mathbf{i} \in 2^{[1,k]} \\ \vert \mathbf{i} \vert \geq 1}} a_\mathbf{i} \cdot x^{\vert \mathbf{i} \vert -1}
\end{equation*}

If $a_\emptyset \not\in \{0,1\}$: Consider a prime $p$ such that $p \nmid a_\emptyset$. Take $x_1=x_2=\ldots=x_k=p$ to get,
\begin{equation*}
    S(p,p,\ldots,p) = a_\emptyset + p \cdot g(p)
\end{equation*}
So, $S(p,p, \ldots, p) \equiv a_\emptyset$ (mod $p$). As $p \nmid a_\emptyset$, $a_\emptyset \not\equiv 0$ (mod $p$). So, $S(p,p, \ldots, p) \not\equiv a_\emptyset$ (mod $p$). This implies that $S(p,p, \ldots, p) \not\in \{p\}$.

If $a_\emptyset=1$: Put $x_1=x_2=\ldots=x_k=2$, to get, 
\begin{align*}
   S(2,2, \ldots, 2) & = a_\emptyset + 2 \cdot g(2) \\
   & = 1 + 2 \cdot g(2)
\end{align*}
So, $S(2,2, \ldots,2) \equiv 1$ (mod 2) which implies $S(2,2, \ldots, 2) \not\in \{2\}$.

If $a_\emptyset=0$: For some $n \in \mathbb{N}$, we have
\begin{equation*}
    S(n,n, \ldots, n)=n \cdot g(n)
\end{equation*}
Claim: $\exists$ $m \in \mathbb{N}$ such that $g(m) \neq 1$.

Suppose $\forall$ $n \in \mathbb{N}$, $g(n) = 1$.

Consider the following polynomial
\begin{align*}
  f(x) & = g(x)-1 \\
       & = \left( \sum\limits_{\substack{\mathbf{i} \in 2^{[1,k]} \\ \vert      \mathbf{i} \vert \geq 1}} a_\mathbf{i} \cdot m^{\vert  
            \mathbf{i} \vert -1} \right) - 1 \\
       & = \left( \sum\limits_{s=1}^{k} \left(
           \sum\limits_{\substack{\mathbf{i} \in 2^{[1,k]} \\ \vert \mathbf{i} \vert = s}} a_\mathbf{i} \right) x^{s-1} \right) - 1
\end{align*}
We have $f(n)=0$ $\forall$ $n \in \mathbb{N}$  i.e. $f$ has infinitely many roots. But $f$ is a polynomial of degree $k-1$. So, it has exactly $k-1$ roots in $\mathbb{C}$, which is a contradiction. So, $\exists$ $m \in \mathbb{N}$ such that $g(m) \neq 1$.

Take $x_1=x_2= \ldots = x_k=m$, to get,
\begin{align*}
  S(m,m, \ldots, m) & = m \cdot g(m) \\
                    & \neq m
\end{align*}
\end{proof}

\begin{remark}
The argument in the above proof cannot be used for proving Lemma \ref{lemma: additive_S_cannot_be_self-loop}. If instead of the cases for $a_\emptyset$ in Lemma \ref{lemma: multiplicative_S_cannot_be_self-loop}, if we do the same with $a_0$, the method will work for $a_0 \not\in \{0,1\}$. But for the $a_0=0$ case, we will have $S(n,n, \ldots, n)=n \sum\limits_{\mathbf{i} \in 2^{[1,k]}} a_\mathbf{i}$. If $\sum\limits_{\mathbf{i} \in 2^{[1,k]}} a_\mathbf{i} = 1$, it would not work.
\end{remark}

\section{Proof of Lemmas \ref{lemma: every_b_there_is_s_extended_version}, \ref{lemma: additive_S_every_B_there_is_S_extended_version} and \ref{multiplicative_every_B_there_is_S_any_two_extended_version}}
\label{appendix: proof_of_characterisation_section_lemmas}

\arbitraryExtended*
\begin{proof}
 Use $S^\prime(x_1,x_2, \ldots, x_k)=S(min(x_1,x_2, \ldots, x_k))$, where $S$ is the generating function in Lemma \ref{lemma: every_b_there_is_s}. Take $x_1=x_2=\ldots=x_k=x$ and use the same argument as in Lemma \ref{lemma: every_b_there_is_s}.
 \end{proof}
 
 \additiveExtended*
 \begin{proof}
Say $q \in B$. Consider the following $S$:
\begin{equation*}
    S: (x_1,x_1, \ldots, x_k) \rightarrow x_1+x_2+ \ldots + x_{k-1}-(k-1)x_k+(q+1)
\end{equation*}
Take $x_1=x,~x_2=x_3=\ldots=x_k=q$ to get $S(x,q,\ldots,q)=x+1$. 
So, $\{q,q+1,q+2,\ldots\} \subseteq \bigcup\limits_{i=0}^{\infty} S^i(B)$.

Now we put $x_1=x,~x_2=x_3=\ldots=x_k=q+2$, to get, $S(x,q+2,\ldots,q+2)=x-1$. This implies $\{q-1,q-2, \ldots, 1\} \subseteq \bigcup\limits_{i=3}^{q+1} S^i(B)$. So, $\mathbb{N} \subseteq \bigcup\limits_{i=0}^{\infty} S^i(B)$.
\end{proof}

\multiplicativeExtended*
\begin{proof}
Say $p,q \in B$. Consider the following $S$:
\begin{equation*}
   S(x_1,x_2, \ldots, x_k)=x_1x_2 \ldots x_k+(x_{n-1}x_n+x_n-x_nq+1)-qx_2x_3 \ldots x_k
\end{equation*}
Put $x_1=q$, to get, $S(q,x_2,\ldots,x_n)=x_{n-1}x_n+x_n-x_nq+1$. Now, use the same argument as in Lemma \ref{multiplicative_every_B_there_is_S_any_two}.
\end{proof}

\section{Proof of Lemmas \ref{lemma: smallest_power_of_x_gives_S_inj_power}, \ref{lemma: powers_of_injective_version_of_S} and \ref{lemma: no_new_elements_generated}}
\label{appendix: proof_of_reduction_section_lemmas}

\smallestpower*
\begin{proof}
$(\Longrightarrow)$ Let us use induction to prove this.
\begin{enumerate}
\item If $l(x,\langle B,S \rangle)=1$, then for a tuple $\mathbf{n_x} \in B^k$, we have $S_{inj}(\mathbf{n_x})=S(\mathbf{n_x})=x$. So, $x \in S_{inj}(B)$. As $x \in Cl(\langle B,S \rangle) \setminus B$, $x \in D_1(\langle B,S_{inj} \rangle)$.
\item Suppose that if $l(x,\langle B,S \rangle)=m$, then $x \in D_m(\langle B,S_{inj} \rangle)$.
\end{enumerate}
to show: if $l(x,\langle B,S \rangle)=m+1$, then $x \in D_{m+1}(\langle B,S_{inj} \rangle)$.

As, $l(x,\langle B,S \rangle)=m+1$, we have a tuple $\mathbf{n_x} \in \bigcup\limits_{i=0}^{m} S^{m}(B)$ such that $S(\mathbf{n_x})=x$. Let $\mathbf{n_x}=(y_1,y_2, \ldots, y_k)$. Then for at least one of the $y_i$s, $l(y_i,\langle B,S \rangle)=m$, otherwise $l(x,\langle B,S \rangle)<m+1$. So, $y_i \in S_{inj}^{m}(B)$, which implies that $x \in S_{inj}^{m+1}(B)$.

$(\Longleftarrow)$ Suppose $x \in D_m(\langle B,S_{inj} \rangle)$. Then $x \in S^m(B)$, which implies $l(x,\langle B,S \rangle) \leq m$. If $l(x,\langle B,S \rangle)<m$, then $x \in S_{inj}^l(B)$ for some $l<m$. So, $x \not\in D_m(\langle B,S_{inj} \rangle)$, which is a contradiction. So, $l(x,\langle B,S \rangle) = m$.
\end{proof}

\powerinjective*
\begin{proof}
For any set $A \in \mathbb{N}^k$, as $S_{inj}$ is a restricted version of $S$, $S_{inj}(A) \subseteq S(A)$. By repeatedly applying $S_{inj}$ and this property, we get $S_{inj}^i(B) \subseteq S^i(B)$ for $i \geq 1$. Let us use induction now. \\
1) We have $S_{inj}(B) \subseteq S(B)$. Also, $S_{inj}(B) \cap B = \emptyset$ (follows from the definition of $S_{inj}$). So, $S_{inj}(B) = S_{inj}(B) \setminus B \subseteq S(B) \setminus B$.
Let $x \in S(B) \setminus B$. Then, $l(x,\langle B,S \rangle)=1$. So, from Lemma \ref{lemma: smallest_power_of_x_gives_S_inj_power}, we have $x \in S_{inj}(B)$, which implies that $S(B) \setminus B \subseteq S_{inj}(B)$. \\
2) Suppose $S_{inj}^i(B)=S^i(B) \setminus B$ $\forall$ $i \leq m$. We need to show that $S_{inj}^{m+1}(B)=S^{m+1}(B) \setminus B$.
\begin{align*}
  S_{inj}^{m+1}(B) & = S_{inj}(\bigcup\limits_{i=0}^{m} S_{inj}^i(B)) \\
                   & = S_{inj}(\bigcup\limits_{i=1}^{m} [S^i(B) \setminus B] \cup B) \\
                   & = S_{inj}(\bigcup\limits_{i=0}^{m} S^i(B)) \\
                   & \subseteq S(\bigcup\limits_{i=0}^{m} S^i(B)) = S^{m+1}(B)
\end{align*}
But, $S_{inj}^{m+1}(B) \cap B = \emptyset$. So, $S_{inj}^{m+1}(B)=S_{inj}^{m+1}(B) \setminus B \subseteq S^{m+1}(B) \setminus B$. Let $x \in S^{m+1}(B) \setminus B$. Then, $l(x,\langle B,S \rangle) \leq m+1$. So, $x \in S_{inj}^i(B)$ for $1 \leq i \leq m+1$. But, as $S_{inj}^i(B) \subseteq S_{inj}^{m+1}(B) for 1 \leq i \leq m+1$, we have $x \in S_{inj}^{m+1}(B)$. So, $S^{m+1}(B) \setminus B \subseteq S_{inj}^{m+1}(B)$. Hence, from 1) \& 2), ${S_{inj}}^i(B) = S^i(B) \setminus B$ for $i \geq 1$.
\end{proof}

\noNewElements*

\begin{proof}
We use induction to prove this.
\begin{enumerate}
\item $D_{n(\langle B,S \rangle)}(\langle B,S \rangle) =\emptyset$ (follows from the definition of $n(\langle B,S \rangle)$)
\item Suppose $D_k(\langle B,S \rangle) = \emptyset$ for some $k \geq n(\langle B,S \rangle)$. Either $S^k(B)=\emptyset$ or $S^k(B) \subseteq Cl_{k-1}(\langle B,S \rangle) = \emptyset$.
\end{enumerate}

\textbf{Case 1: If $S^k(B)=\emptyset$}, then $S^i(B)=\emptyset$ for $1 \leq i \leq k-1$ (as $S^i(B) \subseteq S^k(B)$ for $1 \leq i \leq k-1$). So, we have 
$S^{k+1}(B) = S(Cl_k(\langle B,S \rangle)) = S(B) = \emptyset$.
Similarly, $S^i(B)=\emptyset$ $\forall$ $i \geq k$. So, $Cl(\langle B,S \rangle)=B$, which is a contradiction as $\langle B,S \rangle$ is an $\mathbb{N}$-I.M.

\textbf{Case 2: If $S^k(B) \subseteq Cl_{k-1}(\langle B,S \rangle)$}, we have 
\begin{equation*}
    S^{k+1}(B) = S(Cl_k(\langle B,S \rangle)) \subseteq S(Cl_{k-1}(\langle B,S \rangle)) = S^k(B).
\end{equation*}
So, $D_{k+1}(\langle B,S \rangle) = \emptyset$. By the principle of induction, $D_i(\langle B,S \rangle) = \emptyset$ $\forall$ $i \geq n(\langle B,S \rangle)$.
\end{proof}

\section{Motivation for Definition \ref{def: reduction}}
\label{sect: motivation_reduction_definition}
Suppose we have a proof for the statement
\begin{equation*}
    \sum\limits_{i=1}^{n} i = \dfrac{n(n+1)}{2}
\end{equation*}
that uses the first principle of induction i.e. $\langle B_0,S_0 \rangle = \langle \{1\},x \rightarrow x+1 \rangle$.
Now that we have this proof, can we construct a proof that uses the Prime Induction i.e. $\langle B,S \rangle = \langle \mathbb{P} \cup \{1\}, S: (x,y) \rightarrow xy \rangle$?

Let $\Omega(n)$ be the number of prime factors of $n$, counted with multiplicity. Consider the following relation $R: \mathbb{N} \rightarrow 2^\mathbb{N}$:
\begin{equation*}
R(n)=\begin{cases}
      \{1\}, & \text{ if } n=1 \\
      [1,\Omega(n)], & \text{ otherwise }
\end{cases}
\end{equation*}
We construct a new statement
\begin{equation*}
Q(n)=\bigwedge\limits_{x \in R(n)} P(x)
\end{equation*}
Now let us try to prove that $Q(n)$ is true for all $n \in \mathbb{N}$ using the Prime Induction Model.

\begin{enumerate}
    \item[] Step 1 (Base Case): $Q(1)=\bigwedge\limits_{x \in R(1)} P(x) = P(1)$. Also, for any prime $p$, $Q(p)=P(1)$. So, $Q(n)$ is true for $n=1$ and $n \in \mathbb{P}$.
    \item[] Step 2 (Induction Step): Suppose $Q(m)$ and $Q(n)$ are true ($m,n \neq 1$). So, $\bigwedge\limits_{x \in R(m)} P(x)$ and $\bigwedge\limits_{x \in R(n)} P(x)$ i.e. $\bigwedge\limits_{x \in [1,\Omega(m)]} P(x)$ and $\bigwedge\limits_{x \in [1,\Omega(n)]} P(x)$ are true. As $P(x)$ is true implies $P(x+1)$ is true, we have that $\bigwedge\limits_{x \in [1,\Omega(m)+\Omega(n)]} P(x)$ is true. But $\Omega(x)+\Omega(y)=\Omega(xy)$ for all $x,y \in \mathbb{N}$. So, $\bigwedge\limits_{x \in [1,\Omega(mn)]} P(x)$ is true, which implies that $Q(mn)$ is true.
    \item[] Step 3 (Conclusion): So, by the Prime Induction Model, $Q(n)$ is true for all $n \in \mathbb{N}$ i.e. $\bigwedge\limits_{x \in [1,\Omega(n)} P(x)$ is true for all $n \in \mathbb{N}$. This implies that $P(n)$ is true for all $n$ since $\bigcup\limits_{n \in \mathbb{N}} [1,\Omega(n)]=\mathbb{N}$.
\end{enumerate}

The key to this proof is the relation $R$ and the new statement $Q$. We want the relation to satisfy three conditions essentially. First, that the base case of the first I.M. is mapped to the base case of the second one. This takes care of Step 1. Second, we need $\bigcup\limits_{n \in \mathbb{N}} R(n)=\mathbb{N}$ for Step 3 to work.
To take care of Step 2, we define $R(n)$ for $n \in S^i(B)$ using the values for $x \in \bigcup\limits_{1 \leq j<i} S^j(B)$. We look at the tuple which generates $n$ and we use the values of $R$ for the components of this tuple to obtain the value of $R(n)$.

\section{Details for Example \ref{ex: equivalence_first}}
\label{Appendix: equivalence_example}

In this example, we have $\langle B_1, S_1 \rangle = \langle \mathbb{P}, x \rightarrow x-1 \rangle$ and $\langle B_2, S_2 \rangle =\langle \{1,2,3,4,5\}, x \rightarrow x+5 \rangle$. We now show that $\langle B_2,S_2 \rangle$ can be reduced to $\langle B_1,S_1 \rangle$.

Consider the following relation, $R$:
\begin{align*}\mathbb{P} \rightarrow \{1,2,3,4,5\}\end{align*}
For $x \in \mathbb{N} \setminus \mathbb{P}$, let $p$ be the smallest prime greater than $x$. Then $R(x)=[1,5(p-x+1)]$
\begin{enumerate}
    \item $\bigcup\limits_{x \in \mathbb{N}} R(x) =
          \left[\bigcup\limits_{x \in \mathbb{P}} R(x) \right] \bigcup \left[\bigcup\limits_{x \in \mathbb{N} \setminus \mathbb{P}} R(x) \right]$ \\
          Let us see if there exists an $x \in \mathbb{N} \setminus \mathbb{P}$ such that $5n+b \in R(x)=$, where $n>0$, $1 \leq b \leq 5$. Enough to check if $5(n+1) \in R(x)$. Suppose such an $x$ does not exist. Then the distance between every pair of primes is less than $n$, which is not true as we can construct arbitrarily long sequences of composite numbers of the form $m!+2, m!+3, \ldots, m!+m$. So, we have a contradiction. So, such an $x$ exists. This gives us
          \begin{align*}
            \bigcup\limits_{x \in \mathbb{N}} R(x) & = \{1,2,3,4,5\} \bigcup \left[ \bigcup\limits_{a=1}^{\infty} 5a+b \right] = \mathbb{N}
          \end{align*}
   \item $\bigcup\limits_{x \in \mathbb{P}} R(x) = \{1,2,3,4,5\}$
   \item For $x \in \mathbb{N} \setminus \mathbb{P}$, $x=S_1(x+1)$. Let the first prime greater than or equal to $x$ be $p$. If $x+1$ is a prime, then $x+1=p$, then $R(x+1)=\{1,2,3,4,5\}=[1,5(p-x)]$. If $x+1$ is composite, then p is the smallest prime $\geq x+1$. So, by definition, $R(x+1)=[1,5(p-x)]$.
         \begin{align*}
           S_2(R(x+1)) \cup R(x+1) 
           & = S_2([1,5(p-x)]) \cup [1,5(p-x)] \\
           & = [6,5(p-x+1)] \cup [1,5(p-x)] \\
           & = [1,5(p-x+1)] = R(x)
         \end{align*}
\end{enumerate}


\end{document}